\definecolor{cornflowerblue}{rgb}{0.39, 0.58, 0.93}
\definecolor{pink1}{rgb}{0.87, 0.36, 0.51}
\definecolor{purple1}{rgb}{0.44, 0.16, 0.39}
\definecolor{orange1}{rgb}{1.0, 0.22, 0.0}
\definecolor{azure}{rgb}{0.0, 0.5, 1.0}
\definecolor{arsenic}{rgb}{0.23, 0.27, 0.29}
\newcommand{\edits}[1]{\textbf{#1}}
\theoremstyle{plain}
\newtheorem{thm}{Theorem}[section]
\theoremstyle{definition}
\newtheorem{defn}{Definition}[section]
\theoremstyle{plain}
\newtheorem{assum}{Assumption}[section]
\theoremstyle{plain}
\newtheorem{lem}{Lemma}[section]
\xpatchcmd{\algorithmic}{\itemsep\z@}{\itemsep=2ex plus2pt}{}{}
\begin{document}

%Title of paper
\title{Randomized Benchmarking in the Analogue Setting}

\author{E.~Derbyshire}
\affiliation{School of Informatics, University of Edinburgh, Edinburgh EH8 9AB, UK}

\author{J.~Yago Malo}
\affiliation{Department of Physics and SUPA, University of Strathclyde, Glasgow G4 0NG, Scotland, UK}

\author{A.~J.~Daley}
\affiliation{Department of Physics and SUPA, University of Strathclyde, Glasgow G4 0NG, Scotland, UK}

\author{E.~Kashefi}
\affiliation{School of Informatics, University of Edinburgh, Edinburgh EH8 9AB, UK}
\affiliation{Laboratoire d\textquotesingle Informatique de Paris 6, CNRS, Sorbonne Universit\'{e}, 4
place Jussieu, 75005 Paris, France}

\author{P.~Wallden}
\affiliation{School of Informatics, University of Edinburgh, Edinburgh EH8 9AB, UK}

\date{\today}

\begin{abstract}
Current development in \emph{programmable} analogue quantum simulators (AQS), whose physical implementation can be realised in the near-term compared to those of large-scale digital quantum computers, highlights the need for robust testing techniques in analogue platforms. Methods to properly certify or benchmark AQS should be efficiently scalable, and also provide a way to deal with errors from state preparation and measurement (SPAM). Up to now, attempts to address this combination of requirements have generally relied on model-specific properties. We put forward a new approach, applying a well-known digital noise characterisation technique called randomized benchmarking (RB) to the analogue setting. RB is a scalable experimental technique that provides a measure of the average error-rate of a gate-set on a quantum hardware, incorporating SPAM errors. We present the original form of digital RB, the necessary alterations to translate it to the analogue setting and introduce the analogue randomized benchmarking protocol (ARB). In ARB we measure the average error-rate per time evolution of a family of Hamiltonians and we illustrate this protocol with two case-studies of analogue models; classically simulating the system by incorporating several physically motivated noise scenarios. We find that for the noise models tested, the data fit with the theoretical predictions and we gain values for the average error rate for differing unitary sets. We compare our protocol with other relevant RB methods, where both advantages (physically motivated unitaries) and disadvantages (difficulty in reversing the time-evolution) are discussed.
\end{abstract}
\maketitle

\section{Introduction}
In the quest for real-world applications in the present and near-future, the focus of quantum computing has fallen on noisy intermediate-scale quantum systems. While there is a lot of interest in digital quantum computing/simulation (DQC), significant progress has been made in analogue  quantum simulation (AQS), which offers existing medium-scale systems, and where relevant physical problems are mimicked by a highly-tunable quantum system. AQS are engineered to evolve continuously in time according to a specific Hamiltonian \emph{or} family of Hamiltonians which can be flexibly controlled and microscopically understood from first principles. Instead of the application of quantum logic gates, as in the digital setting, a calculation is performed through the unitary or dissipative evolution under the system Hamiltonian. As a result, AQS currently lacks many of the existing testing or error-correcting methods of DQC. Being able to trust that AQS are simulating the designated quantum system, or running the same class of Hamiltonians in a \emph{reproducable} way becomes essential when the AQS cannot be classically simulated.  
\par 
Most AQS experiments aim to estimate the ground state of a Hamiltonian or to learn about the dynamical properties of the system and the spread of information, i.e. on a given quantum hardware or when evolving under a given Hamiltonian. Typically, AQS is tested against accurate numerical methods in classically solvable regimes (such as one-dimensional dynamics) or cases where the final outcome is known, i.e. classical optimization. More recently, a number of ideas have been developed for tackling classically intractable regimes. These include a self-verifying technique for simulations of the Lattice-Schwinger model \cite{Selfverifying}, a method based on non-demolition measurements of the Hamiltonian \cite{Yang19} and the broader concept of cross-platform verification, i.e. comparing results for the same problem from different quantum hardware or the measurement of compatible correlation functions that can verify specific Hamiltonians even when the quantum dynamics \emph{cannot} be classically simulated \cite{Hangleiter2017,Bermejo2018}. It is also possible to estimate the final state of the AQS through quantum process/state tomography (QPT/QST) \cite{QST, QPT}, direct fidelity estimation (DFE) \cite{DFE}, or in recent years, matrix product state (MPS) tomography  \cite{MPSTomo} and neural-network approaches \cite{Torlai2018}. \par 
\par
QST is a technique used to reconstruct an unknown quantum state when given multiple copies of it to measure, and to estimate the final state of a quantum simulation, whilst QPT is a similar technique used to reconstruct an unknown quantum process (quantum channel) and to test the process of a quantum simulation. Both techniques require resources that grow exponentially with the size of the system. DFE is a technique to determine whether a system arrives at some target state or runs a target gate. This is more efficient than the previous techniques since knowing the target state/operation means that there are fewer resources (measurement settings) required, however it does not directly account for the errors from state preparation and measurement (SPAM) and is still not scalable in the sense that the \emph{actual} number of measurements still scales exponentially. MPS tomography provides an estimate of the final state of the system, and also requires less measurement settings than standard tomography due to the fact that it exploits tensor network techniques (and specifically matrix product states) to approximate the final state; again, it does not include SPAM errors in its analysis and in the suitable non-classical regime, scales exponentially in terms of resources. To summarise, (i) none of the techniques mentioned previously succeed in giving an account of the errors from state preparation and measurement, meaning that the accuracy to which they characterise the noise of the system is always bounded by an unknown contribution from the SPAM errors and ii) these methods are not efficiently scalable, and can only provide characterisation for systems of up to $\sim20$ qubits failing to capture correlations further than a few sites \cite{MPSTomo} or relying on state symmetries \cite{Toth2010}. 
\par
Despite the considerable advances in recent years \cite{Selfverifying} most of the current methods rely on model-specific properties to improve on previous proposals. The aim of our research is not to verify the ground state estimate of complex Hamiltonians, but instead find a general way to capture the performance of an analogue quantum simulator in running a family of Hamiltonians. Randomized benchmarking (RB) is a digital method to find a measure for the holistic performance of a quantum hardware that takes into account SPAM errors and is theoretically efficiently scalable, i.e. in the length of the gate sequences applied ($poly(L)$) and in the size of the system ($poly(N)$). This scalability is dependent on the efficiency with which the gateset tested can be composed/compiled with the native operations of the chosen hardware. 
\par 
The motivation for this work grew from the need to overcome the limitations of previous methodologies and develop scalable non-model-dependent methods, building on ideas from the digital setting and applying them to AQS, i.e. with the intention to reduce the gap in testing techniques that exists between the digital and analogue fields of quantum computing. RB in the analogue setting would offer a way to determine whether your chosen quantum architecture will reliably perform a set of quantum evolutions or unitaries, giving an average error rate for this set of unitaries; particularly useful when considering
\emph{programmable} analogue quantum simulation. RB has the potential to efficiently provide a measure of performance of AQS in regimes that are currently classically intractable without the SPAM error noise floor on the accuracy of this partial noise characterisation. Moreover, it will become apparent that RB is more natural and physically motivated for analogue systems than for their digital counterparts. It is also important to mention that the method presented below particularly in the case of perfect time inversion is similar to the Loschmidt Echo \cite{Gardiner1997,Gorin2006}, a well-known technique relevant in the context of quantum chaos.
\par 
With these motivations in mind, we propose extending randomized benchmarking to the analogue setting, which we call \textit{analogue randomized benchmarking} (ARB). In Section \ref{sec: RB} we present the original form of randomized benchmarking and the technical details of it, in order to keep this work self-contained and to better explain the alterations made for our contribution. Following in Section \ref{sec: analogue}, the modifications necessary to extend randomized benchmarking to analogue quantum simulators; including the current form of the analogue randomized benchmarking (ARB) protocol (Sec.~\ref{sec:ARBProtocol}). In Section \ref{sec: casestudies} we first present two case studies of simulating ARB on concrete analogue models (Sec.~\ref{sec: simplecase}) and then in Sec.~\ref{sec: furtherexplorations} we analyse the robustness of this characterisation in the presence of more complex noise, dissipation and experimental imperfections. Note that we do not perform experiments on physical hardware, but instead emulate these devices by classically modelling physically motivated noise scenarios. We conclude in Sec.~\ref{sec: conclusion} with considering the barriers for physical implementation of this protocol as well as some ideas to overcome them, and introducing the possible future directions that have transpired through this process. 
\section{Randomized Benchmarking}\label{sec: RB}
Randomized benchmarking (RB) \cite{SNEwRUO, Knill08, AAMeier, Onorati19} is a technique for evaluating the performance of a quantum hardware which simplifies the error channel of a quantum process such that one can extract the average error per gate of a particular gate-set on this hardware. This measure is for each gate when it is run as part of a long random computation, a relevant metric for most quantum algorithms that rely on this type of computation. In its standard form, it relies on the gate-set having a particular distribution, and on being able to efficiently invert a string of gates from the gate-set with one single operator. A general randomized benchmarking protocol will consist of variants of the following steps: 1) preparing an initial state $\rho_\psi = \ket{\psi}\bra{\psi}$ on a quantum device, 2) running very many random sequences of gates of varying lengths such that the system should return to this initial state, 3) measuring the probability that the state remained unchanged, and lastly, 4) plotting and fitting the results to a pre-determined decay model that characterises the average error rate of the gates. 
\par
The central theory is based on two observations, the first is that \emph{any} error channel, irrespective of the correlations it initially exhibits, when \emph{twirled} (essentially averaged over random unitaries, see App.~\ref{sec: additionalsec}) ``behaves'' as a much simpler error channel that is easier to characterise and quantify. Twirling is the process that transforms a quantum channel $\Lambda(\rho)$ into twirled channel $\Lambda_t(\rho)$ by conjugating over unitaries $U(\rho) \in U(d)$, in the following way: 
\begin{equation}
    \Lambda_t(\rho) = \int_U d\mu(U) U \circ \Lambda \circ U^{\dagger}(\rho) \enspace ,
\end{equation}
where $d$ is the dimension of the Hilbert space. If the unitaries $U(\rho)$ are distributed according to the Haar measure $d\mu$ \cite{Haarmeasure}, a measure of uniformity, then the twirled channel becomes a \emph{depolarising channel}. A depolarising channel is a simple quantum channel of the following form: $\mathcal{E}_d(\rho) = p_{\mathcal{E}}\rho + (1 - p_{\mathcal{E}})\frac{\mathds{I}}{d}$. Intuitively, with some probability $p_{\mathcal{E}}$ the depolarising channel $\mathcal{E}_d(\rho)$ leaves the state $\rho$ intact, whilst with the remaining probability $(1- p_{\mathcal{E}})$ returns the maximally mixed state $\frac{\mathds{I}}{d}$; where $\mathds{I}$ is the Identity operator. One of the first papers to suggest this kind of method, by Emerson et al \cite{SNEwRUO}, used the equivalence of a Haar-twirled quantum channel to its average fidelity over Haar-random unitaries, to construct a motion reversal protocol which results in a single parameter (depolarising parameter) to describe a noisy quantum channel. This was introduced as a way to efficiently estimate the strength of the noise channel on a device and with this method, errors from state preparation and measurement (SPAM), which are independent of the length of a sequence, can also be extracted from the error characterisation.  However, this early model had the obstacle that implementing sequences of Haar-random unitaries is inefficient. 
\par 
This brings us onto the second key observation, which is that the averaging over this infinite (Haar) set of unitaries can be mimicked by sampling from a small finite set. Finite sets that approximate this average are known as \emph{unitary t-designs}; randomly sampling from a unitary t-design is equivalent to randomly sampling from the Haar random unitaries provided that the \emph{averaged} quantity computed involves polynomials of order, at most, t. Formally, a unitary t-design is a set of unitaries $\{U_k\}$, where $\{k = 1, ..., K\}$ such that:
\begin{equation}
    \frac{1}{K} \sum_{k=1}^{K} P_{t,t}(U_k) = \int_{U(d)} d\mu (U) P_{t,t}(U) \enspace ,
\end{equation}
for every polynomial $P_{t,t}(U)$ of order $t$, where $d\mu(U)$ is the Haar distribution. In other words, for any polynomial of unitaries of degree $t$ or less, calculating the average over the set $\{U_k\}$ is the same as calculating the average over all the unitaries (Haar integral). An $\epsilon$-approximate t-design is a set of unitaries that has the same property, where it holds only up to some error $\epsilon$. For RB, the gate-set must be \emph{at least} an exact or approximate 2-design (for intuition as to why this is the case, please see App.~\ref{sec: AppA}). In summary, RB aims to test and quantify how well a set $\{U_{k}\}$ performs on average
on a given quantum hardware, by utilising the twirling property of the Haar distribution. The twirling property is then used to simplify the effective error channel and make it quantifiable with a single parameter (depolarisation). In other words, if one has a set that is a unitary t-design, then by performing the method of randomized benchmarking one obtains an average error rate of the said (gate)-set on that quantum device.
\par 
We present here a basic randomized benchmarking protocol, for completeness, and for technical details and further explanation of how the method works we refer to App.~\ref{sec: additionalsec}. 
\begin{figure}
\begin{algorithm}[H]
\floatname{algorithm}{Protocol}
\caption{Digital Randomized Benchmarking}
\label{alg:protocol1}
\renewcommand{\thealgorithm}{}
\begin{algorithmic}[1]
\State Sample uniformly from $\{U_{k}\}$ a number of sequence lengths $S_{l}$ and run a sequence $S_{\eta}$ at length $l \in S_{l}$ where: $S_{\eta} = \Lambda_{U_{k_{\eta}+1}} \Lambda_{U_{k_{1}}}, .... \Lambda_{U_{k_{\eta}}}$, and $\Lambda_{U_{k_{\eta}+1}}$ is a single operator deterministically chosen to invert the preceding sequence of unitaries (i.e. $\Lambda_{U_{k_{\eta} + 1}} = [\Lambda_{U_{k_{1}}},..., \Lambda_{U_{k_{\eta}}}]^{\dagger}$). This sequence should return the system to its initial state $\rho_{\psi}$. \label{a1step1}
\State Repeat this sequence $R$ times and record $Tr[E_{\psi} \ S_{\eta}(\rho_{\psi})]$ to see if initial state $\rho_{\psi}$ survived the sequence $S_{\eta}$ and call this the survival probability $P_{\eta}$ for sequence $\eta$. \label{a1step2}
\State Repeat this for varying sequences of the same length $l$ and find the average probability that the initial state survived for this sequence length, $Tr[E_{\psi} \ S_{l}(\rho_{\psi})]$ where $S_{l}$ represents the average over all sequences of length $l$. Call this the average survival probability for length $l$: $P_{l}$.\label{a1step3}
\State Repeat the above steps for sequences of different lengths, and plot average survival probability against sequence length, i.e. $P_{l}$ vs $l$. Fit results to a pre-determined decay curve: $P_{l} = A + B \ f^l$ where $l$ is the sequence length and $f$ is the fidelity decay parameter, with $A$ and $B$ absorbing SPAM errors. \label{a1step4}
\State The average error rate can be characterised by $r$ where $r = (d-1)(1-f)/d$, and $d$ is the dimension of the Hilbert space for a system of qubit size $n$ ($d=2^n$). \label{a1step5}
\end{algorithmic}
\end{algorithm}
\end{figure} 
In this protocol,  $\Lambda_{U_k}$ is the imperfect implementation of $U_k$, $\rho_\psi$ is the initial state, taking into account preparation errors, whilst $E_\psi$ is the positive-operator valued measure (POVM) element taking into account measurement errors; in the ideal case $E_\psi = \rho_\psi = \ket{\psi}\bra{\psi}$. The probability of the initial state surviving the random sequences is called the \emph{survival probability} where for a random quantum circuit $U_C$, $P := \bra{\psi}\Lambda_{U_C}(\rho_\psi)\ket{\psi} = Tr(E_\psi \Lambda_{U_C} (\rho_\psi))$, with $\Lambda_{U_C}$ the imperfect implementation of $U_C$. The \emph{average} survival probability ($P_l$) in the protocol, is equivalent to a product of twirled depolarising channels and due to the left-invariance of the Haar-twirl, can be compared to the average fidelity of the error channel (see App.~\ref{sec: additionalsec}) which results in the decay curve: $P_l = A + B f ^l$. Therefore when the data is fit to the curve, this gives an average error rate $r$, where $r = 1 - F_{ave} \equiv (d - 1) (1 - f)/d$. The fit parameters $A$ and $B$ vary for different versions of the protocol, and in the simplest case are $A = \frac{1}{d}$, $B = \frac{d-1}{d}$. 
\par
In order for the above protocol to really quantify the average error-rate of a gate-set, as given in step \ref{a1step5}, we need to make a number of simplifying assumptions on the type of noise (error-channels) the device has:  the errors should be (i) gate-independent $\Lambda_{U_k}=\Lambda$ and (ii) time-independent, i.e. independent of the time it takes to run the gate and of when it is applied in any part of any sequence, (iii) the error-channel should be trace-preserving and memoryless (iv) the SPAM errors should be length-independent and (v) the error in the inversion step can be viewed as a single step; therefore, it does not scale with the length of the sequence which allows it to be absorbed into the SPAM errors. Finally, given that the gate-set tested is not universal, it is also crucial that (vi) the inversion step (in the ideal case) can also be implemented using gates from the tested gate-set. Realistically, the physical errors on a quantum process are likely to be gate-dependent and time-dependent, and more recent protocols show RB to be robust against certain types of noise \cite{Wallman17, Mageson11, Mageson12, Wallman15, HashagenFranca, DRB, Merkel18}, however the method is most straightforward when the noise assumptions are adhered to. In our investigation, therefore, we first analyse the simplest case (Sec.~\ref{sec: simplecase}) and build up our protocol with different noise models in Sec.~\ref{sec: furtherexplorations}.
\par 
The Clifford group of operators is a well-known and simple to construct 2-design \cite{AHarrow}, and the average performance of these gates is a relevant parameter for several reasons, including error-correcting codes, and the fact that with the addition of one extra single-qubit unitary gate, the set becomes universal. Therefore, many RB protocols test the Clifford group, utilising the fact that a single inversion operator may be found efficiently for the Clifford group \cite{GottesmanKnill} such that the protocol consists of running strings of random quantum gates where the last gate is the inversion operator. Efficiently finding a single inversion operator in the analogue setting is currently not possible. Due to this, we adopt a method similar to Emerson et al's \cite{SNEwRUO} debut of running imperfect unitaries and their inverses for our analogue randomized benchmarking protocol, rather than a single inverse for a string of gates, where ours is closer to the form of the Loschmidt echo sequence \cite{Gardiner1997,Gorin2006}. Our protocol differs in that we do not test Haar-random unitaries nor even an exact 2-design, instead aiming to benchmark an $\epsilon$-approximate 2-design. There are several RB methods that are relevant to our exploration, which we highlight in the next section.

\section{The Analogue Setting}\label{sec: analogue}
In this section, we first give an overview of extending RB to the analogue setting, with technical details from Sec.~\ref{sec:unitaryset} onwards, and our ARB protocol (see Protocol~\ref{alg:protocol2}) in Sec.~\ref{sec:ARBProtocol}. We replace the quantum logic gates tested by digital RB (most commonly, the Clifford group or generators thereof) with a set of time evolution operators/unitaries: $\{U_{k} = e^{-iH_{k}dt}\}$. One of the key conditions of performing RB with these types of unitaries is that they must converge to an approximate unitary 2-design, i.e. form an $\epsilon$-approximate 2-design. There have been interesting results in benchmarking finite groups that are not the Clifford group or a known 2-design \cite{Helsen18, DRB, HashagenFranca}, in particular Direct \cite{DRB} and Approximate \cite{HashagenFranca} randomized benchmarking; in the former, rather than benchmarking the full Clifford group the protocol tests a set of ``native gates'' with the requirement that they \emph{generate} the Clifford group. The latter writes the RB protocol in terms of arbitrary finite groups giving a bound on results from RB when sampling from an approximate Haar-distribution on this group compared to sampling from the full Haar-distribution. Our protocol is similar in the sense that we aim to sample from an approximate Haar distribution, using a finite set of unitaries, however we do not require that this set forms a group, we only require that this set approximates a 2-design. Therefore, we gain some freedom in our approach which allows us to test more quantum hardware that can not fulfill this stronger condition, such as analogue quantum simulators. The result of \cite{HashagenFranca} is also important to our protocol, because the authors demonstrate that equally meaningful results can be gained from RB with a distribution close enough to that of a 2-design (or fully Haar-random).
\par 
We create a set of unitaries from a disordered set of multi-qubit Hamiltonians (details to follow), in a way that is more easily implementable on a physical device, and we sample unitaries from this set to create long sequences such that we assume after some time that we approximate a 2-design. We view our unitaries as generators of an approximate 2-design and our aim is firstly, to demonstrate heuristically that the disorder we create in our Hamiltonians does in fact result in convergence to an approximate 2-design and secondly, that this convergence rate is sufficient enough for approximate twirling (through ARB) to characterise realistic noise in this setting. Furthermore, no RB scheme has been attempted in the analogue setting nor directly adapted to the natural evolution of a quantum device. One of the caveats of digital one and two-qubit gate RB methods is that they require compilation of physical gates, that are produced naturally in the device, into gates of the Clifford group or other digital gates that generate such a group. Compilation of \emph{any} logical gates that are not native to the device limits the size of the system that one can feasibly test, and we bypass this compilation issue by using only the native capabilities of the device. 
\subsection{Unitary Set for ARB}\label{sec:unitaryset}
The unitaries tested are created by disordering a Hamiltonian $H_s$, native to the quantum device of interest, resulting in a set of Hamiltonians $\{H_k\}$ that when time-evolved produce a set of unitaries $\{U_k = e^{-i H_{k} dt}\}$. Our aim is to construct a technique that while general, takes into account the physical limitations and the possibilities offered by existing systems. For this reason, the choice of $dt$ is determined factoring physical constraints of the device; it should not be less than the minimum time required to \emph{physically allow} for the Hamiltonian to be changed per time-step. We define $H_k$ to be:
\begin{equation}
    H_k = H_s + \zeta_k^{(g, l)} \enspace ,
    \label{eqn: Hk}
\end{equation}
where $\zeta_k^{(g, l)}$ is an added disorder term which we define to be one of the following:
\begin{equation}
    \begin{split}
       \zeta_{k}^{g} &= \Delta_{k}\sum_{ij} \sigma_{i}^{u}\otimes\sigma_{j}^{u} \\
       \zeta_{k}^{l} &= \sum_{ij} \Delta_{k}^{ij} \sigma_{i}^{u}\otimes \sigma_{j}^{u} \enspace,
    \end{split}
    \label{eqn: disorder}
\end{equation}
where the indexes $(g, l)$ denote global ($g$) and local ($l$) disorder terms, and $\Delta_k^{(i)}$ is a disorder potential that varies for every $\{k = 1, ..., K\}$ and, in the case of local disorder, according to which sites it is acting on. Here the index $u = x, y, z$ indicates which product of Pauli operators is to be applied on nearest-neighbours. By varying the original Hamiltonian $H_{s}$ with these disorder terms we generate a family of Hamiltonians.
From this family $\{H_{k}\}$, we generate the time evolution operators to obtain our unitary set $\{U_{k}\}$ for a fixed time-step ($dt$).
\par 
Producing an $\epsilon$-approximate 2-design in this setting is an interesting study, and there are relevant results demonstrating convergence to 2-designs for time-dependent Hamiltonians exhibiting Brownian motion \cite{Kliesch16}, for locally disordered Hamiltonians \cite{BHH} and the use of locally random unitaries to estimate R\'enyi entropies \cite{randomquenches}. In the latter study of R\'enyi entropies the authors found that by applying a local disorder potential and a single-site Pauli-operator to all sites of a Hamiltonian sector ($H|_A$) at each time-step, their random unitaries converged to a unitary 2-design on the sectors, defined from the irreproducible representation decomposition of the group generated by their unitaries. Our Hamiltonians, on the other hand, are time-independent and the aim is to converge to a 2-design over the entire Hilbert space of our system. Their results did however influence our choice of disorder potential and we apply disorder potentials also drawn from a normal distribution, with standard deviation $\delta = J$; however, we apply these potentials globally (homogeneous along the chain) and locally, in order to generate two different sets of unitaries for each model, applying them with a standard spin-spin interaction term. For a fixed Hamiltonian $H_s$, the generated time-evolution operator can never approach a 1-design, let alone a 2-design \cite{Roberts16}, because the distribution of the eigenvalues will be too localised. Disordered Hamiltonians have a rich body of literature exploring their role in quantum thermalization, quantum chaos, scrambling, and random unitaries \cite{Srednicki94, Brown07, scrambling, Guhr90, Swingle16, Hayden07}. Intuitively, for a set of Hamiltonians of the form of Eq.~\ref{eqn: Hk} to converge to a unitary 2-design, the disorder term should not conserve any of the symmetries of the Hamiltonian $H_s$. If the disorder term were to commute with parts of $H_s$ and therefore conserve, for example, the total spin, then the disordered Hamiltonians would be exploring only subspaces of the Hilbert space \cite{Marchildon02}; whereas, breaking all the symmetries of a Hamiltonian should produce statistics associated with random unitaries \cite{Roberts16, AbdElHady02, Guhr97, Blumel92, scrambling, BHH}. Initially, we expected that our sets of unitaries would in themselves be $\epsilon$-approximate 2-designs, and with sufficiently large $K$ we predict this to be the case. For all of our simulations, however, we fix $K = 1000$, as a large finite set of unitaries that is experimentally feasible. Subsequent investigation indicated that the combination of the elements of these sets produce an approximate 2-design after some time (with some sets performing better than others, see Sec.~\ref{sec: casestudies}) hinting that our sets are, rather, generators of $\epsilon$-approximate 2-designs. 
\par 
In ARB the main operation that we want to achieve is twirling the error channel into a depolarising channel by averaging this channel over our unitary set (see Sec.~\ref{sec: RB} and App.~\ref{sec: additionalsec}); with the steps in standard RB performing this necessary twirl because the unitaries that are randomly sampled are \emph{uniformly} distributed. As mentioned, there is evidence that approximating a 2-design is adequate for RB \cite{HashagenFranca} however there is no guarantee that the errors will be depolarised with an $\epsilon$-approximate 2-design since the unitaries will not come from an \emph{exact} uniform distribution. The value of $\epsilon$ is crucial, with only a sufficiently good approximate design producing meaningful results from RB. We assume that our sets produce an $\epsilon$-approximate 2-design for long sequences, and we derive bounds on our results in Sec.~\ref{sec: casestudies} based on this assumption.  Here, we define an $\epsilon$-approximate 2-design and present these bounds.
\begin{defn}
\label{def:two-design}
An $\epsilon$-approximate unitary 2-design defined in terms of the diamond-norm \cite{diamondnorm} is a measure on a finite subset $U(D)$, where $D$ is the dimension of the Hilbert space, that satisfies the following property \cite{ExactandApprxDesigns}:
 \begin{equation}
      \|\mathds{E}_{\alpha} (\Lambda) - \mathds{E}_{\mu}(\Lambda)\|_{\diamond} \leq \ \epsilon \enspace ,
      \\
     \label{eqn: epsilondesigndiamond}
 \end{equation}
 where $\mathds{E}_{\alpha}$ is the twirled channel of $\Lambda$ over a set of unitaries $\{U_{\alpha}\}$ spread according to a probability distribution $\alpha$. $\mathds{E}_{\mu}$ is the Haar-twirl of that channel, with $\mu$ the Haar measure.
 \end{defn}
 \par 
A unitary 2-design corresponds to the case when $\mathds{E}_{\alpha} = \mathds{E}_{\mu}$%, i.e. this twirl 
. There are a few ways to determine how close a particular set of unitaries is to an exact unitary 2-design, that involve comparison with the way that Haar random unitaries behave, such as the frame potential \cite{spherical} and comparing the second moment operators \cite{HarrowAPPX} (see App.~\ref{sec: app_proof}). 
\begin{thm}
\label{thm: survival-prob-bound}
Let $P_l^\mu = A + B f^l$, $P_l^\alpha = P_l^\mu \pm \delta P_l$ be the average survival probabilities measured from randomized benchmarking for a sequence length, $l$, of unitaries distributed according to the Haar measure, $\mu$, and a set of unitaries, $\{ U_\alpha\}$, distributed according to an unknown $\alpha$, respectively. If the set $\{U_\alpha\}$ is an $\epsilon$-approximate 2-design, it holds that:
\begin{equation}\label{eq:delta_Pl_thm}
    |P_{l}^{\alpha} - P_{l}^{\mu}|=|\delta P_l| \leq l \cdot \epsilon \enspace ,
\end{equation}
where $P_l^\tau = Tr[E_\psi \mathds{E}_\tau(\Lambda)^l(\rho_\psi)], \tau \in \{\alpha, \mu \}$ is the survival probability of input state $\rho_\psi$ when the twirled quantum channel $\mathds{E}_\tau$ is applied to it $l$ times.
\end{thm}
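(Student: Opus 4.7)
The plan is to reduce the bound on the survival-probability difference to a diamond-norm bound on the $l$-fold composition of the twirled channels, and then handle the latter via a standard telescoping argument.

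First I would note that for any CPTP map $\Phi$, any density operator $\rho_\psi$, and any POVM element $E_\psi$ (with operator norm at most $1$), one has $|\mathrm{Tr}[E_\psi \Phi(\rho_\psi)]|\leq\|\Phi(\rho_\psi)\|_1\leq\|\Phi\|_\diamond$. Applying this with $\Phi = \mathds{E}_\alpha(\Lambda)^l - \mathds{E}_\mu(\Lambda)^l$ (a Hermiticity-preserving map, not CPTP, but the trace bound still goes through in the standard way by extending to an ancilla and using duality of the trace norm), I obtain
\begin{equation}
|P_l^\alpha - P_l^\mu| \;=\; \bigl|\mathrm{Tr}\bigl[E_\psi\bigl(\mathds{E}_\alpha(\Lambda)^l - \mathds{E}_\mu(\Lambda)^l\bigr)(\rho_\psi)\bigr]\bigr| \;\leq\; \bigl\|\mathds{E}_\alpha(\Lambda)^l - \mathds{E}_\mu(\Lambda)^l\bigr\|_\diamond \enspace .
\end{equation}
This reduces the problem to controlling the diamond norm of a difference of $l$-fold compositions.

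Next, I would telescope the difference as
\begin{equation}
\mathds{E}_\alpha(\Lambda)^l - \mathds{E}_\mu(\Lambda)^l \;=\; \sum_{i=0}^{l-1} \mathds{E}_\alpha(\Lambda)^{\,l-1-i}\circ\bigl[\mathds{E}_\alpha(\Lambda) - \mathds{E}_\mu(\Lambda)\bigr]\circ \mathds{E}_\mu(\Lambda)^{\,i} \enspace ,
\end{equation}
and apply the triangle inequality. Each summand can then be bounded using sub-multiplicativity of the diamond norm under composition, together with the fact that both $\mathds{E}_\alpha(\Lambda)$ and $\mathds{E}_\mu(\Lambda)$ are CPTP (as convex combinations of unitary conjugations of a CPTP noise channel $\Lambda$), so their diamond norms equal $1$. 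Consequently each telescoping term has diamond norm at most $\|\mathds{E}_\alpha(\Lambda)-\mathds{E}_\mu(\Lambda)\|_\diamond \leq \epsilon$ by Definition~\ref{def:two-design}, and summing the $l$ terms gives the claimed bound $|\delta P_l|\leq l\cdot\epsilon$.

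The main subtlety, and the step I would handle most carefully, is the first inequality: the map whose diamond norm appears on the right-hand side is the difference of two CPTP maps and is therefore only Hermiticity-preserving rather than completely positive, so one cannot directly invoke $\|\Phi(\rho_\psi)\|_1\leq 1$. The clean way to justify it is to recall that for any Hermiticity-preserving $\Phi$ the diamond norm dominates the induced $1\to 1$ norm, so $\|\Phi(\rho_\psi)\|_1\leq \|\Phi\|_\diamond$, and then combine this with $|\mathrm{Tr}[E_\psi \sigma]|\leq \|E_\psi\|_\infty\|\sigma\|_1 \leq \|\sigma\|_1$. Everything else — the telescoping identity, the triangle inequality, and the CPTP$\Rightarrow$diamond-norm-$=1$ fact needed for sub-multiplicativity — is standard and should proceed without further complication.
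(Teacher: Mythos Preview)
Your proposal is correct and follows essentially the same hybrid/telescoping argument as the paper. The only presentational difference is that the paper carries out the replacement of $\alpha$-averages by $\mu$-averages at the level of states (defining intermediate states $\rho_j$ and bounding $\|\rho_j-\rho_{j+1}\|_1\leq\epsilon$ via the trace norm, then applying the triangle inequality), whereas you telescope directly at the level of channels in the diamond norm and invoke sub-multiplicativity together with $\|\text{CPTP}\|_\diamond=1$; both routes hinge on the same single-step replacement bound from Definition~\ref{def:two-design} and yield the identical $l\cdot\epsilon$.
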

\begin{proof}[Proof Sketch] To go from the \emph{measured} survival probabilities to the exact case, one needs to replace the $l$ sums over the $\epsilon$-approximate 2-design with the corresponding integrals over the Haar measure. For each one of these replacements, the maximum difference between the two probabilities increases by $\epsilon$, resulting in the $l \cdot\epsilon$ at the end of Theorem~\ref{thm: survival-prob-bound}. The full proof is given in App.~\ref{sec: app_proof}.
\end{proof}
\begin{assum}
\label{assum: statisticalerror}
We assume that the statistical error in estimating the value of $P_{l}^{\alpha}$ with the RB method, is much smaller \footnote{This assumption can always be fulfilled with a suitable choice of the number of repetitions $R$ of each sequence.} than the error induced by running RB with an $\epsilon$-approximate 2-design rather than an exact 2-design:
\begin{equation}
    \delta P_{l}^{\alpha} \ll \delta P_{l} \enspace .
\end{equation}
\end{assum}
\begin{thm}
\label{thm: final_bound}
Let $A$ and $B$ be known quantities, and let $r'$ be the average error rate determined from RB for the set of unitaries $\{U_\alpha \}$ that are an $\epsilon$-approximate 2-design. Where the average error rate for Haar-distributed unitaries $\{U_\mu\}$ is $r = (d - 1) (1 - f)$, and $d = 2^n$ where $n$ is the dimension of the Hilbert space the unitaries are applied to. Under Assumption~\ref{assum: statisticalerror} where Theorem~\ref{thm: survival-prob-bound} holds, with $l = 1$ and no errors from state preparation or measurement, $r'$ is bounded as follows:
\begin{equation}
\begin{split}
    r - \epsilon &\leq r' \leq r + \epsilon \enspace .
\end{split}
\label{eqn: epsilonbounds}
\end{equation}
\end{thm}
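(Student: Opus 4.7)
The plan is to reduce the bound on the average error rate directly to Theorem~\ref{thm: survival-prob-bound} evaluated at the single sequence length $l=1$, and then propagate the resulting survival-probability bound through the linear algebraic relation $r=(d-1)(1-f)/d$ that connects the fidelity decay parameter to the error rate.

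First I would write both decay curves with the same prefactors, $P_l^\mu = A + B f^l$ and $P_l^\alpha = A + B (f')^l$, where $f'$ is the decay parameter one would extract by fitting the $\epsilon$-approximate 2-design data. Because the hypotheses state that $A$ and $B$ are known and there are no SPAM errors, in both expressions $A = 1/d$ and $B = (d-1)/d$, so the only thing that differs between the two curves is the decay parameter itself. Specialising to $l=1$ then gives $P_1^\alpha - P_1^\mu = B(f' - f)$.

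Next I would invoke Theorem~\ref{thm: survival-prob-bound} at $l=1$, which under Assumption~\ref{assum: statisticalerror} (so that the measured $P_1^\alpha$ can be treated as the exact twirled quantity) yields $|P_1^\alpha - P_1^\mu| \leq \epsilon$. Dividing by $B$ produces $|f' - f| \leq \epsilon/B$, and multiplying by $(d-1)/d$ converts this into a bound on the error rate: $|r' - r| = \frac{d-1}{d}\,|f' - f| \leq \frac{d-1}{dB}\,\epsilon$. Substituting $B = (d-1)/d$ cancels the prefactor and produces exactly $|r' - r| \leq \epsilon$, which rewritten as a two-sided inequality is the statement of the theorem.

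There is no real analytical obstacle; the argument is essentially an algebraic post-processing of Theorem~\ref{thm: survival-prob-bound}. The subtlety worth highlighting is the choice of $l=1$: the bound of Theorem~\ref{thm: survival-prob-bound} scales linearly in the sequence length, so the shortest sequence is where the looseness introduced by using an $\epsilon$-approximate (rather than exact) 2-design is smallest, and any larger $l$ would weaken the conclusion. It is also worth noting that the clean $\epsilon$ on the right-hand side depends on the exact cancellation between $B = (d-1)/d$ and the $(d-1)/d$ in the definition of $r$; once SPAM errors are admitted, $B$ shrinks and the bound would degrade by a factor $1/B$, which is presumably why the theorem is stated in the SPAM-free regime with $A$ and $B$ assumed known.
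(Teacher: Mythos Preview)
Your argument is correct and follows essentially the same route as the paper: bound $|P_1^\alpha-P_1^\mu|$ by $\epsilon$ via Theorem~\ref{thm: survival-prob-bound}, convert to a bound on $|f'-f|$ by dividing through $B$, and then to $|r'-r|$ using $r=\tfrac{d-1}{d}(1-f)$ together with $B=\tfrac{d-1}{d}$ in the SPAM-free case. The only organisational difference is that the paper first derives general-$l$ lemmas (requiring a linearisation $(f\pm\delta f_l)^l\approx f^l\pm l f^{l-1}\delta f_l$) before specialising to $l=1$, whereas you go to $l=1$ immediately and thereby avoid that approximation altogether.
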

The proof of Thm.~\ref{thm: final_bound} can be found in App.~\ref{sec: app_proof}.
\subsection{Systematically Inverting Unitaries} \label{sec:timereversal}
The standard form of RB (see Protocol~\ref{alg:protocol1}) involves a single deterministically chosen inversion operator $\Lambda_{U_{k+1}}$ that inverts the preceding unitary sequence. The errors in the process before this inversion step get depolarised through the twirling of the error channel, but the single inversion operator will also have errors attached to it. Due to this inversion operator being much shorter in length than the sequence preceding it, the error associated with this operator is a constant additive
error. While this (SPAM \& inversion) error is not (in general) depolarising, as far as computing the survival probabilities, there exists some depolarising error channel that would have the exact same effect\footnote{This happens because we measure in the basis $\{\ket{\psi}\bra{\psi},I-\ket{\psi}\bra{\psi}\}$ and any off-diagonal terms of the deviation do not contribute in the (survival) probabilities.}, and therefore we can model it as such.
\par 
As mentioned, the inversion operator of any string of Clifford operations can be found efficiently. Unfortunately, there is no equivalent result in the analogue setting, and therefore the initial development of the protocol involves systematically inverting each preceding unitary, like the Loschmidt echo sequence \cite{Gardiner1997,Gorin2006}, and similar to \cite{SNEwRUO}. This systematic inversion of the preceding sequence means that the errors, now a combination of forward evolution and inversion errors
do not, in general, depolarise in the RB process; therefore, for the purposes of analysis in Sec.~\ref{sec: simplecase}, we model the systematically inverted unitaries as perfect. We explore the scenario of having the same type of noise in the inversion operators in Sec.~\ref{sec: furtherexplorations}. A necessary step to physically implementing analogue randomized benchmarking will be to remove the time-reversal aspect of our protocol. Firstly, because with systematic inversion the errors will not necessarily be depolarised, and secondly and, most importantly, because performing this type of time-reversal is not trivial on the analogue quantum simulator. While certain terms such as field or on-site terms can be inverted efficiently in the current experiments, the inversion of off-site couplings or tunneling terms still remain challenging \cite{Garttner2017,Li2017}. To highlight this non-triviality, we note that only recently was such inversion realised even in a simplified model \cite{Lesovik19}.
\subsection{Analogue Randomized Benchmarking Protocol} \label{sec:ARBProtocol}
Determine a set of Hamiltonians $\{H_{k}\}$ with $k = \{1, ..., K\}$ such that $\{{U}_{k}\}$ forms a sufficiently good $\epsilon$-approximate unitary 2-design for twirling over them to approximate a depolarising channel, where ${U}_{k} = e^{-iH_{k}dt}$. The time-step $dt$ is kept the same for each unitary operator to mitigate time-dependent errors (except in Sec.~\ref{sec: furtherexplorations}). We again prepare the quantum system in initial pure state $\rho_{\psi}$ where, if ideally prepared, $\rho_{\psi} = \ket{\psi}\bra{\psi}$ and assume that the error channel is a trace-preserving and memoryless CPTP map, with errors being gate and time-independent. 
\par 
The parameters for the standard RB protocol (see Protocol~\ref{alg:protocol1} in Sec.~\ref{sec: RB}) still apply, but we redefine the form of $S_{l}$ and introduce another measure for the sequence length : $S_{T}$ where $S_{T} \equiv S_{l}$ and $T$ represents the total time to run each sequence of length $l$, i.e. $T = dt \cdot l$ and so, we highlight that the average error-rate gained from this protocol is the average error as a function of physical time $T$. Now, our sequence lengths are called $S_{T}$ and we refer to length as time-length. \par 
  \begin{figure}
\begin{algorithm}[H]
\floatname{algorithm}{Protocol}
\caption{Analogue Randomized Benchmarking}
\label{alg:protocol2}
\renewcommand{\thealgorithm}{}
\begin{algorithmic}[1]
\State Sample uniformly from $\{U_{k}\}$ a number of sequence time-lengths $S_{T}$ and run a sequence $S_{\eta}$ of time-length $T \in S_{T}$ where: $S_{\eta} = [{\Lambda}_{U_{k_{1}}},..., {\Lambda}_{U_{k_{\eta}}}, {\Lambda}_{U_{k_{\eta}}}^{\dagger}, ..., {\Lambda}_{U_{k_{1}}}^{\dagger}$] \ on your system such that if each unitary was perfectly implemented your system would be returned to initial state $\rho_{\psi}$. Here we systematically invert each preceding unitary. \label{a2step1}
\State Repeat the sequence $R$ times; record the survival probability $ P_{\eta} = Tr[E_\psi \ S_{\eta}(\rho_\psi)]$ for this sequence. \label{a2step2}
\State Repeat the above for various sequences of the same time-length to get the average survival probability for each sequence time-length: $P_{T} = Tr[E_\psi \ S_{T} (\rho_\psi)]$, where $S_T$ is the average over all sequences of time-length, $T$. \label{a2step3}
\State Repeat the above steps for sequences of different time-lengths and plot the average survival probability against time-length $P_{T}$ vs $T$. \label{a2step4}
\State Fit the results to a predetermined decay curve of the following or similar form: $P_{T} = A + B  f^{T},$ where again $T$ is the sequence time-length and $f$ represents the fidelity decay parameter of the process, with $A$ and $B$ fit parameters that absorb SPAM errors. And again, the average error rate is characterised by $r$ where $r =(d-1)(1-f)/d$ and $d$ is the dimension of the Hilbert space for a system of qubit size $n$ ($d = 2^{n}$). \label{a2step5} 
\end{algorithmic}
\end{algorithm}
\end{figure}
 In the next section we focus on describing how ARB would operate on an analogue
device, where our analysis involves classically simulating the process, including modelling and implementing the error channels.
\section{Case studies}\label{sec: casestudies}
To analyse ARB, we consider a spin system native to many quantum simulators, which is particularly relevant to the case of trapped ion experiments \cite{Kim2011,Lanyon2011,Britton2012}. Specifically, we modelled the XY Hamiltonian:
\begin{equation}
\label{eq:XY_ham}
H_{s} = \sum^{N}_{ij} J_{ij}(\sigma_{i}^{+}\sigma_{j}^{-} + \sigma_{i}^{-}\sigma_{j}^{+}) + B\sum^{N}_{j} \sigma_{j}^{z} \enspace,
\end{equation}
where $J_{ij} \propto \frac{J}{|i - j|^{\alpha}}$ is the interaction term and \edits{$0 < \alpha < \infty$} dictates the strength of the interaction, $\sigma^{+,-,z}$ are the corresponding Pauli operators, $N$ is the length of the spin chain, i.e. the number of qubits in the system, and $B$ is the transverse magnetic field strength.We focus on two regimes; on the larger $\alpha$ values, where we assume \edits{$\alpha\rightarrow\infty$}, i.e. nearest-neighbour interactions only \cite{Porras2004}, and on $\alpha\sim0$ which corresponds to an all-to-all coupling between sites. Current experimental development \cite{Richerme2014,Jurcevic2014,Bernien2017,MPSTomo} has led to regimes where the theoretical prediction for the above XY Hamiltonian becomes classically intractable, it is therefore essential to provide a characterisation of such devices in the absence of simulation capabilities.
\par
In the following we consider the case of a system consisting of $N=6$ spins studying how the coupling strength, the disorder terms and the field strength affect the protocol. We simulate the evolution of the system from an initial product state $\ket{\psi} = \ket{\uparrow \downarrow \uparrow \downarrow\dots}$, usually denoted as the charge-density wave state. We generate our set of unitaries $\{ U_{k} = e^{-iH_{k}dt}\}$ by adding a disorder term chosen from Eq.~\ref{eqn: disorder} with $k = \{1, \dots, 1000\}$. 
In the first case (Sec.~\ref{sec: simplecase}), we run the ARB protocol with perfect inversion operators, 
and model gate and time-independent noise, adhering to the specifications of the standard RB protocol. We look at the impact of the field-term on the protocol and in the following Sec.~\ref{sec: furtherexplorations}, we explore different noise scenarios such as \emph{spontaneous emission}, \emph{weakly time-dependent noise} and the case where we no longer have perfect inversion operators, and instead model the same type of noise in the backward evolution; as well as the impact of the time-step on our simplest case models.

\subsection{Standard ARB with gate and time-independent noise}\label{sec: simplecase}
We modelled the gate and time-independent noise in the form of uniformly distributed fluctuations to both $J$ and $B$ terms that form the static Hamiltonian $H_{s}$ (see Eq.~\ref{eq:XY_ham}). These terms arise from fluctuations or calibration errors in the trapping fields and are native to the experimental device we model. Furthermore, we run the protocol with no errors in state preparation or measurement. Assuming that we have a sufficiently good $\epsilon$-approximate 2-design, it follows that the decay curve that we expect our data to fit is of the form:
\begin{equation}
      \begin{split}
              P_{T} &= A + Bf^{T} \\
                    &= \frac{1}{d} + \frac{d - 1}{d}f^{T} \enspace,
      \end{split}
      \label{eqn: arbcurve}
\end{equation}
where $d = 2^N = 2^6$. In the following results, we fit the ARB decay curve using a non-linear least squares regression fitting tool \cite{matlabtool}. Assuming no SPAM errors and perfect inverses simplifies the comparison of our results to those of an \textit{exact} unitary 2-design (see Thm.~\ref{thm: final_bound}). We use this to estimate the average error rate by fitting $P_{T}$ from the fidelity decay parameter $f$ as $r = (d - 1)(1 - f)/d$.

\subsubsection{XY Hamiltonian with transverse field (Nearest-Neighbours)}\label{sec:nn}
 \begin{figure}[tb]
  \centering
  \includegraphics[width=12cm]{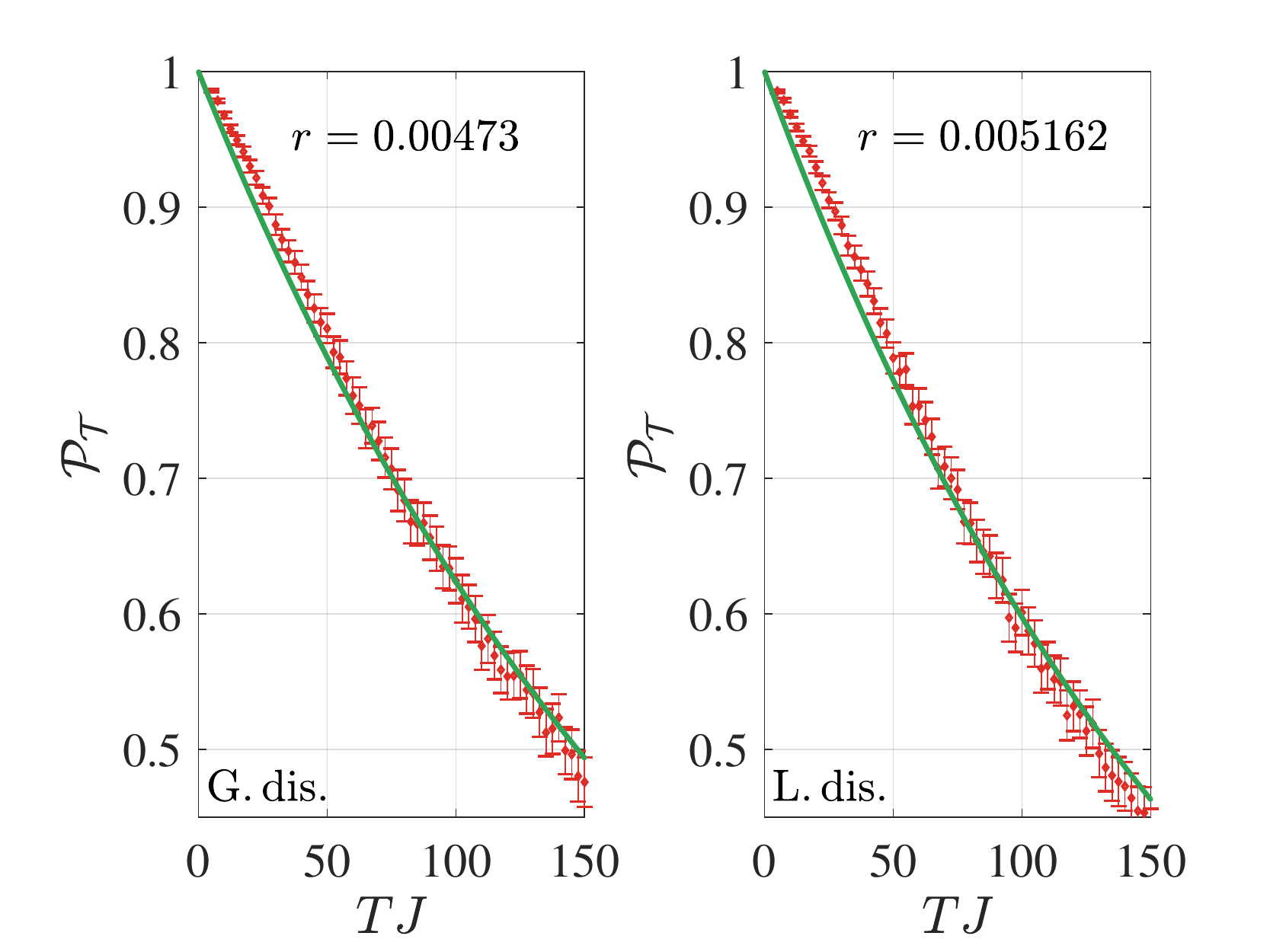} 
  \caption{(Left) Survival probability as a function of total time per sequence, and ARB decay curve fit for a system with nearest-neighbour interactions subject to a global disorder $\zeta_{k}^{g} = \Delta_{k}\sum_{j} \sigma_{j}^{x}\otimes\sigma_{j+1}^{x}$ with  $N=6$, $B=10$, $dt=0.005$, $\Delta B=\Delta J=0$, $\sigma_B=0.5$, $\sigma_J=0.2$, $R=10$ and $n_{seq}=100$; (Right) as with (Left) but subject to a local disorder $\zeta_{k}^{l}= \sum_{j} \Delta_{k}^{j} \sigma_{j}^{x}\otimes \sigma_{j+1}^{x}$. In both cases the resulting fidelity follows the decay curve that we use to estimate the average error rate $r$, which is displayed in each figure. We obtained $r_{g}=0.00473 \ (0.004664, 0.004796)$ 
  for global disorder and $r_{l}=0.005162 \ (0.005068, 0.005256)$ 
  for the local case. For all the displayed results we chose $J=1$ as our frequency reference and scale all timescales with respect to it so that our time axis reads $TJ$.}
  \label{fig:RB_nn}
 \end{figure}
Here, we present the ARB fits for the model described in Eq.~\ref{eq:XY_ham} for the case of $\alpha\rightarrow\infty$ that we consider to be well-described by:
\begin{equation}
\label{eqn: nn-H}
H_{s} = \sum^{N}_{j} J(\sigma_{j}^{+}\sigma_{j+1}^{-} + \sigma_{j}^{-}\sigma_{j+1}^{+}) + B\sum^{N}_{j} \sigma_{j}^{z} \enspace .
\end{equation}
We generate a set of Hamiltonians $\{H_{k}\}$ for both local ($\zeta_{k}^{l}$) and global ($\zeta_{k}^{g}$) disorder terms:
\begin{equation}
\begin{split}
    H_{k}^{g} &= H_{s} + \Delta_{k} \sum_{j} \sigma_{j}^{x} \otimes \sigma_{j + 1}^{x} \\
    H_{k}^{l} &= H_{s} + \sum_{j} \Delta_{k}^{j} \sigma_{j}^{x} \otimes \sigma_{j + 1}^{x} \enspace, \\
\end{split}
\label{eqn: disorderedH-nn}
\end{equation}
where we chose from Eq.~\ref{eqn: disorder} that $u = x$, since this type of coupling should break the symmetry in our $H_{s}$, Eq.~\ref{eqn: nn-H}. This results in a set of $K=1000$ unitaries (with $\Delta^{(j)}_k$ uniformly distributed with standard deviation $\delta=J$) from which to sample for each set $\{H_{k}^{(g,l)}\}$ of the following form:
\begin{equation}
    \{U_{k}^{(g, l)} = e^{-i H_{k}^{(g, l)} dt} \} \enspace . 
\end{equation}

In Fig.~\ref{fig:RB_nn}, we present the ARB protocol results for both the constant global disorder $H_{k}^{g}$ (Left) and the local site-dependent disorder $H_{k}^{l}$ (Right) as indicated in Eq.~\ref{eqn: disorderedH-nn}. On the time evolution forward the system is subject to noise proportional to $H_s$, which we chose to be normally distributed with mean $\Delta J=\Delta B=0$ and standard deviations $\sigma_J=0.2$ and $\sigma_B=0.5$\footnote{Note that these values are taken as an example and we do not require the experimental device to exhibit similar values.}. In these results we conducted $n_{seq} = 100$ sequence iterations for every sequence time-length $S_{T}$ and repeated each individual sequence $R = 10$ times to find the average of a given sequence. We discuss our choices for these parameters in App.~\ref{sec:parameter_conv}. 
\par 
In both cases, the data fits the ARB curve, though at earlier sequence time-lengths the data sits above the curve. The data fitting the curve could imply that the errors are depolarised by the process, which is conditioned on the set being a sufficiently good approximate 2-design. The fact that the data seems to fit to the curve at later times could indicate that the set of unitaries, both global and local, $\{U_k^{(g, l)}\}$ converge to a 2-design at these longer sequences. However, the fit of just one noise model to the curve is not sufficient to imply that our unitaries converge to a 2-design, which is why we explore more complex noise models in Sec.~\ref{sec: furtherexplorations}. In fact, due to our noise model and running perfect inverses, it is possible that the averaging during ARB rather than twirling over an approximate 2-design is what causes the errors to behave like a depolarising channel; hence, we write our results in terms of Thm.~\ref{thm: final_bound}. 
\par
\par
For the average error-rate we obtained (with 95\% confidence bounds):
\begin{eqnarray}
    &r_{l}& = 0.005162\,(0.005068, 0.005256) \\
    &r_{g}& = 0.00473\,(0.004664, 0.004796) \enspace ,
\end{eqnarray}
for locally ($r_{l}$) and globally ($r_{g}$) disordered unitaries, respectively. Where, from Lem.~\ref{lem: boundsonr} we have the average error-rates bounded as follows: \\
\begin{equation}
 \label{eqn:epsilon_rs}   
(r_{\mu} - \epsilon) \leq r_{\alpha}\leq (r_{\mu} + \epsilon) \enspace ,
\end{equation}
where $\alpha\in \{l, g\}$. In this case, $r$ can be thought of as an average infidelity (due to no SPAM errors and no errors in the inversion operators \cite{ProctorWRBM}) per unit time when running a long sequence of these unitaries; therefore, a value of $r \ll 1$ indicates high average fidelity of running the unitaries intended. Since this type of benchmarking has not been applied in the analogue setting before, it is not clear what we would expect the average error rate to be. That is why we set a bound on our results (from Thm.~\ref{thm: final_bound}) and, in Sec.~\ref{sec:alltoall} compare the $r$ value obtained for one of our sets with the error per single unitary for different sets of initial states. It is reasonable to expect a small value of $r$ with the noise model that we chose, and in the context of results gained from more refined noise models (Sec.~\ref{sec: furtherexplorations}) our values of $r_l$ and $r_g$ here seem sensible. 
\par 
As ARB has the potential to be implementable, with some modifications, it is relevant to discuss what $r$ means if we do indeed have a sufficiently good approximate 2-design. We use $r$ to characterise how this type of noisy hardware (modelled as the fluctuations to $J$ and $B$) would behave under this set of unitary operators. In this case, the low values of $r_{(l, g)}$ obtained for our nearest-neighbour model would indicate that both of these sets of operations would perform well overall on such a hardware.

\subsubsection{All-to-all spin model Hamiltonian with transverse field}\label{sec:alltoall}

Here, we present the ARB fits for a spin system governed by Eq.~\ref{eq:XY_ham} for the case of $\alpha\sim0$, i.e. an all-to-all coupled spin system with the same system parameters as in Sec.~\ref{sec:nn}:

\begin{equation}
\label{eqn: aa-H}
    H_{s} = \sum^N_{ij} J_{ij}(\sigma_{i}^{+}\sigma_{j}^{-} + \sigma_{i}^{-}\sigma_{j}^{+}) + B\sum^N_{j} \sigma_{j}^{z} \enspace .
\end{equation}
Again, we generated a set of Hamiltonians $H_{k}$ for both local and global disorder terms:
\begin{equation}
    \begin{split}
         H_{k}^{g} &= H_{s} + \Delta_{k} \sum_{ij} \sigma_{i}^{x} \otimes \sigma_{j}^{x} \\
         H_{k}^{l} &= H_{s} + \sum_{ij} \Delta_{k}^{ij} \sigma_{i}^{x} \otimes \sigma_{j}^{x} \enspace , \\
    \end{split}
\end{equation}
and thereby a set of $1000$ unitaries $\{U_{k}^{(g,l)}\}$ for each $\{H_{k}^{(g,l)}\}$.
\par
 \begin{figure}[tb]
  \centering
  \includegraphics[width=12cm]{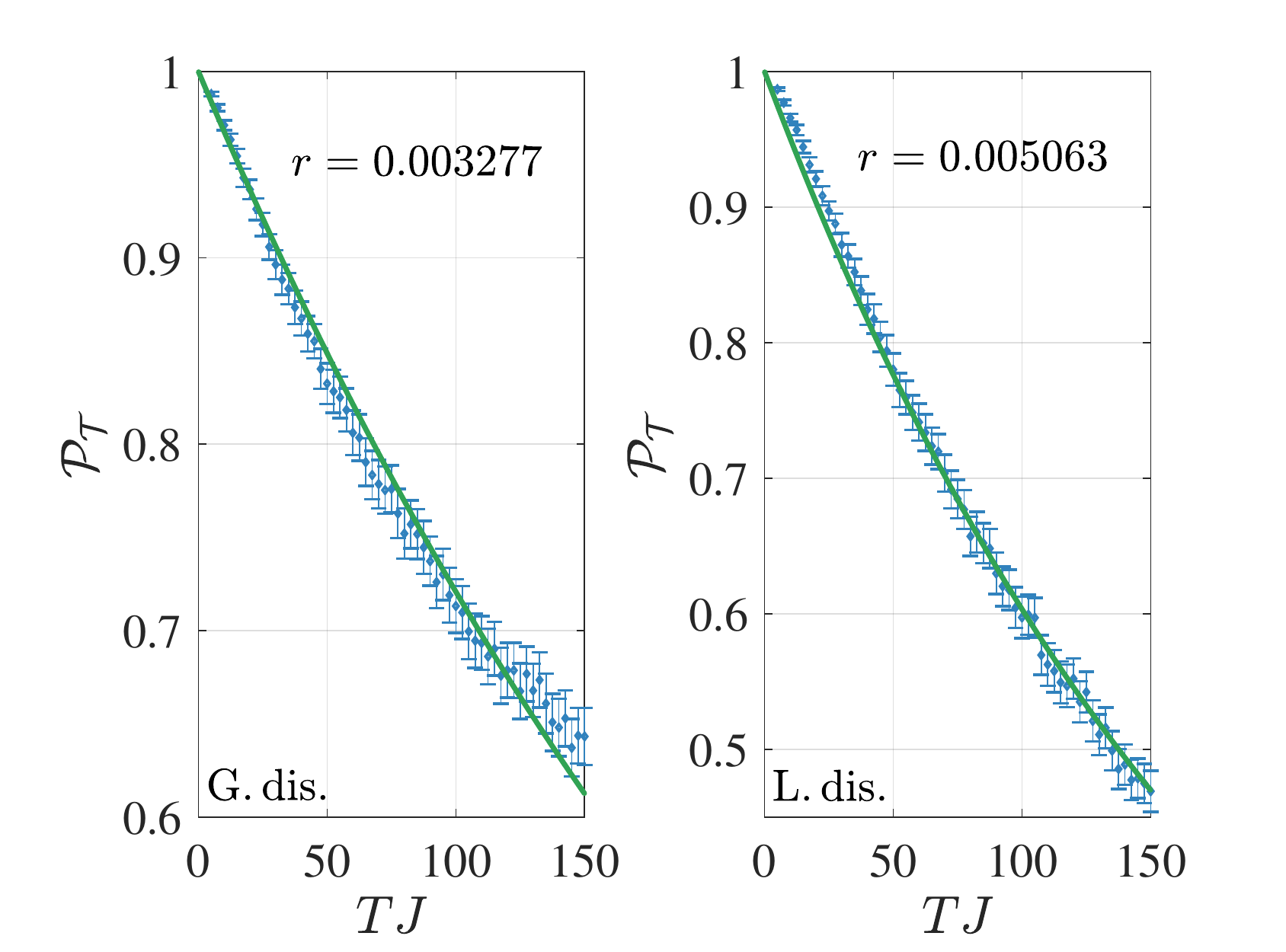} 
  \caption{(Left) Survival probability as a function of total time per sequence, and ARB decay curve fit  for a system with all-to-all coupling subject to a global disorder $\zeta_{k}^{g} = \Delta_{k}\sum_{ij} \sigma_{i}^{x}\otimes\sigma_{j}^{x}$ with  $N=6$, $B=10$, $J=1$, $dt=0.005$, $\Delta B=\Delta J=0$, $\sigma_B=0.5$, $\sigma_J=0.2$, $R=10$ and $n_{seq}=100$; (Right) same as (Left) with a local disorder $\zeta_{k}^{l} = \sum_{ij} \Delta_{k}^{i} \sigma_{i}^{x}\otimes \sigma_{j}^{x}$. In both cases the resulting fidelity follows the decay curve that we use to estimate the average error rate $r$, which is displayed for each case, in the figure. We obtained $r_{g}=0.003277 \ (0.003260, 0.003289)$
  for global disorder and $r_{l}=0.005063 \ (0.005052, 0.005071)$ 
  for the local case.}
  \label{fig:RB_aa}
 \end{figure}
In Fig.~\ref{fig:RB_aa} we present the results of fitting our data for the all-to-all $\{H_k^{(g, l)}\}$ to the ARB curve as in Eq.~\ref{eqn: arbcurve} with the globally disordered unitaries (Left) and locally disordered unitaries (Right). We observe that the two sets display substantially different profiles and, unsurprisingly, values of $r$. In the case of the globally disordered set, the data seems to fit the curve well at very early times then veering further away from the curve until it differs significantly at large $TJ$. This is in contrast to the profile of the locally disordered set, where we observe a much better agreement to the ARB curve. Moreover, the fit of the locally disordered all-to-all set is significantly better than the fit of either of the nearest-neighbour sets (Fig.~\ref{fig:RB_nn}), especially when comparing the behaviour at early times.
\par 

 \par
There are two results to address here, i) comparison of the behaviour of the all-to-all sets $\{U_k^{(g, l)}\}$ and ii) comparison of the behaviour of the locally disordered all-to-all set with both of the nearest-neighbour sets. Before discussing our thoughts on this, we again make it clear that the fit of \emph{one} noise model on these sets does not offer a concrete conclusion as to whether we have an $\epsilon$-approximate 2-design or not. However, if we assume that the data fitting the curve does imply convergence to a 2-design, due to this being a condition of RB producing meaningful results, then we can offer reasons for these results. To address i) let us again consider the scrambling argument discussed in Sec.~\ref{sec:unitaryset}. A global disorder term $\zeta_k^g$ applied to an all-to-all Hamiltonian of the form $H_s$ (Eq.~\ref{eqn: aa-H}) will not necessarily produce enough mixing in the Hamiltonians to explore the Hilbert space sufficiently. 
\par 
To address ii), as discussed earlier we would expect that the more random the disorder we create in our $\{H_k\}$ the faster the convergence to a 2-design with a large set of unitaries $\{U_k\}$, due to \cite{Brown13, scrambling}. The average error rates are as follows:
\par 
\begin{eqnarray}
     &r_{l}&=0.005063 \ (0.005052, 0.005071) \\
    &r_{g}&=0.003277 \ (0.003260, 0.003289) \enspace ,
\end{eqnarray}
with $r_{\alpha}$ bounded by Eq.~\ref{eqn:epsilon_rs}.
Again, if we assume a sufficiently good $\epsilon$-approximate 2-design (an assumption partially substantiated in (Right) Fig.~\ref{fig:RB_aa}) our average error-rates are low. Since we observed the best fit to the curve in the locally disordered all-to-all Hamiltonians, we wanted to determine how sensible this value of $r_l$ is. We numerically determined the average gate infidelity of the set of $K=1000$ unitaries for this noise model, for two sets of $N=100$ random product and pure states, respectively. Since we assume that our set of unitaries converges to a 2-design after some time, it is not expected that the average error rate will be exactly the same; however, for random product states we found a value of $r_l = 0.00150(0.00145,0.0000155)$, which should be representative of the error for the state at the early stages, and for random pure states $r_l = 0.01046(0.01038,0.01054)$. We expect the latter to be higher than the actual value since a random pure state would exhibit substantially more entanglement than a state throughout the time evolution in our protocol. The fact that our value $r_l = 0.005063$ sits between these two values demonstrates that it is a sensible estimate.  
\subsubsection{Impact of Field-term} \label{sec: field-term}
Obtaining the average error-rates ($r$) from the ARB protocol provides a characterisation of the hardware, and how it copes with a specific set of operations (gate-set). We therefore analyse whether some of the \emph{physical} system parameters, of our specific tested system, may impact the measured values of $r$.
\par
\begin{figure}[tb]
	\centering
	\includegraphics[width=11cm]{./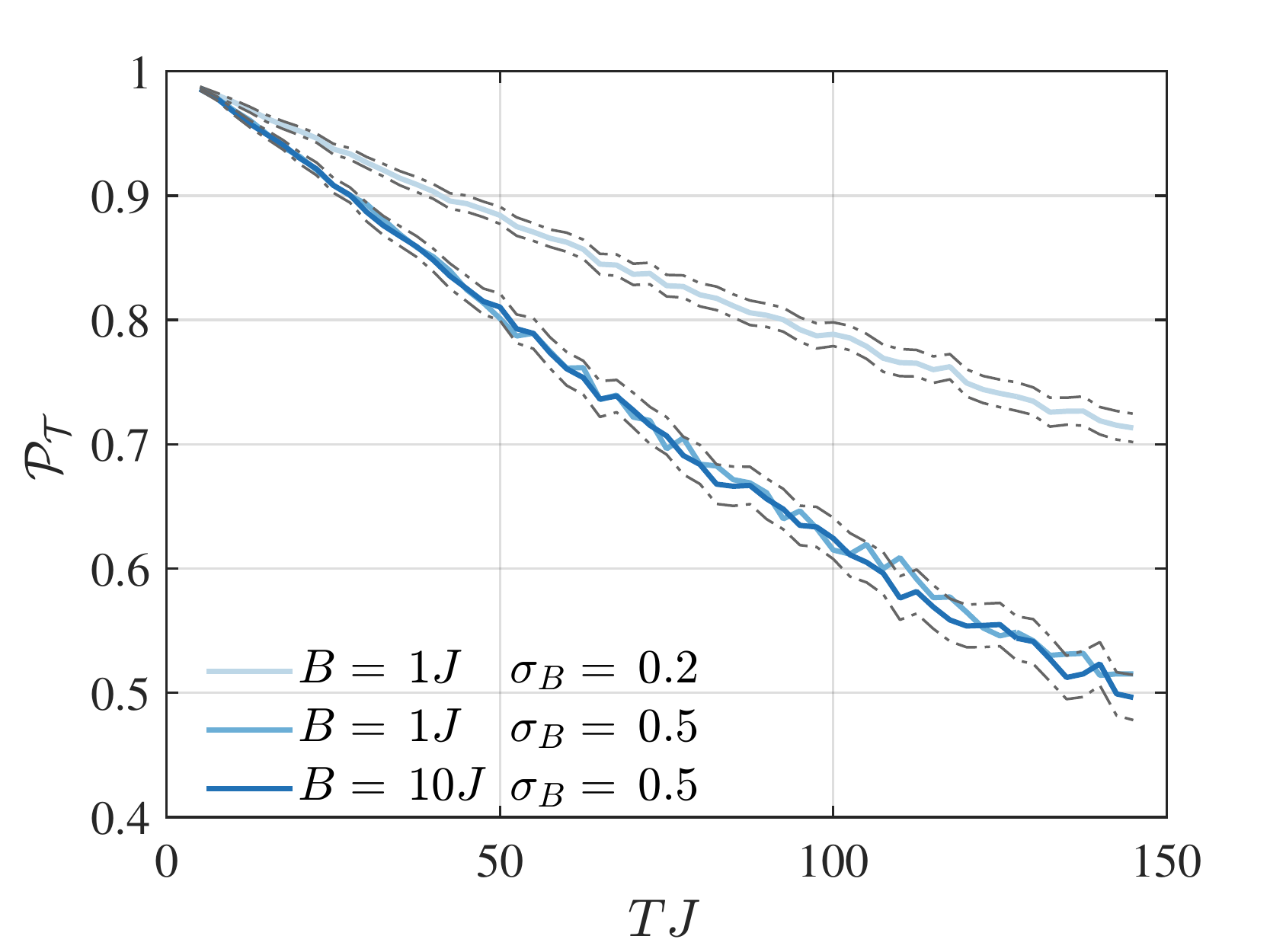} 
	\caption{Survival probability for a system with nearest-neighbour interactions and a global disorder $\zeta_{k}^{g} = \Delta_{k}\sum_{ij} \sigma_{i}^{x}\otimes\sigma_{j}^{x}$ with  $N=6$, $dt=0.005$, $\Delta B=\Delta J=0$, $\sigma_J=0.2$, $R=10$ and $n_{seq}=100$ for different values of field $B$ and the standard deviation of the noise $\sigma_B$. The survival probability does not depend on the static value of the field but only on the magnitude of the noise specified by $\sigma_B$.}
	\label{fig:field_depend}
\end{figure}
In Fig.~\ref{fig:field_depend}, we present the fidelity decay curves for the case of the nearest-neighbour $H_{XY}$ and added global disorder (see Eq.~\ref{eqn: nn-H} and Eq.~\ref{eqn: disorderedH-nn}) as a function of the transverse magnetic field, $B$. We observe that the ARB result does not depend on the off-set value ($B$) of the field but rather only depends on the magnitude of the noise specified by $\sigma_B$. This reveals that, according to our simulations, a quantity that would govern the ground state properties of the device does not affect our protocol. Therefore, the characterisation of the device depends only on the form of the noise and not on the choice of static parameters.
\subsection{Further noise models}\label{sec: furtherexplorations}
In search of more robust claims that our random sets $\{U_k\}$ approximate a 2-design and that ARB can provide meaningful results for more complex noise, we have two main considerations: i) the survival probabilities revealed a better fit to the ARB curve after a finite time (apart from in the globally disordered all-to-all case) which could suggest that the scrambling over time causes the sets to converge to a 2-design. ii) the copious averaging and simple noise model could be the reason for our errors presenting as a depolarising channel, and fitting the ARB curve. We investigate the latter by analysing the protocol in the presence of more elaborate noise: weakly time-dependent noise, spontaneous emission and imperfect inversion operators. These noise models are physically motivated and closer to the experimental conditions of the implementation of the protocol. We also discuss the effects these noise models have on the decay curve in the context of convergence to a 2-design. For these analyses we chose the case of nearest-neighbour coupling with a global disorder term. 
\subsubsection{Weakly time-dependent noise} 
\begin{figure}[tb]
	\centering
	\includegraphics[width=11cm]{./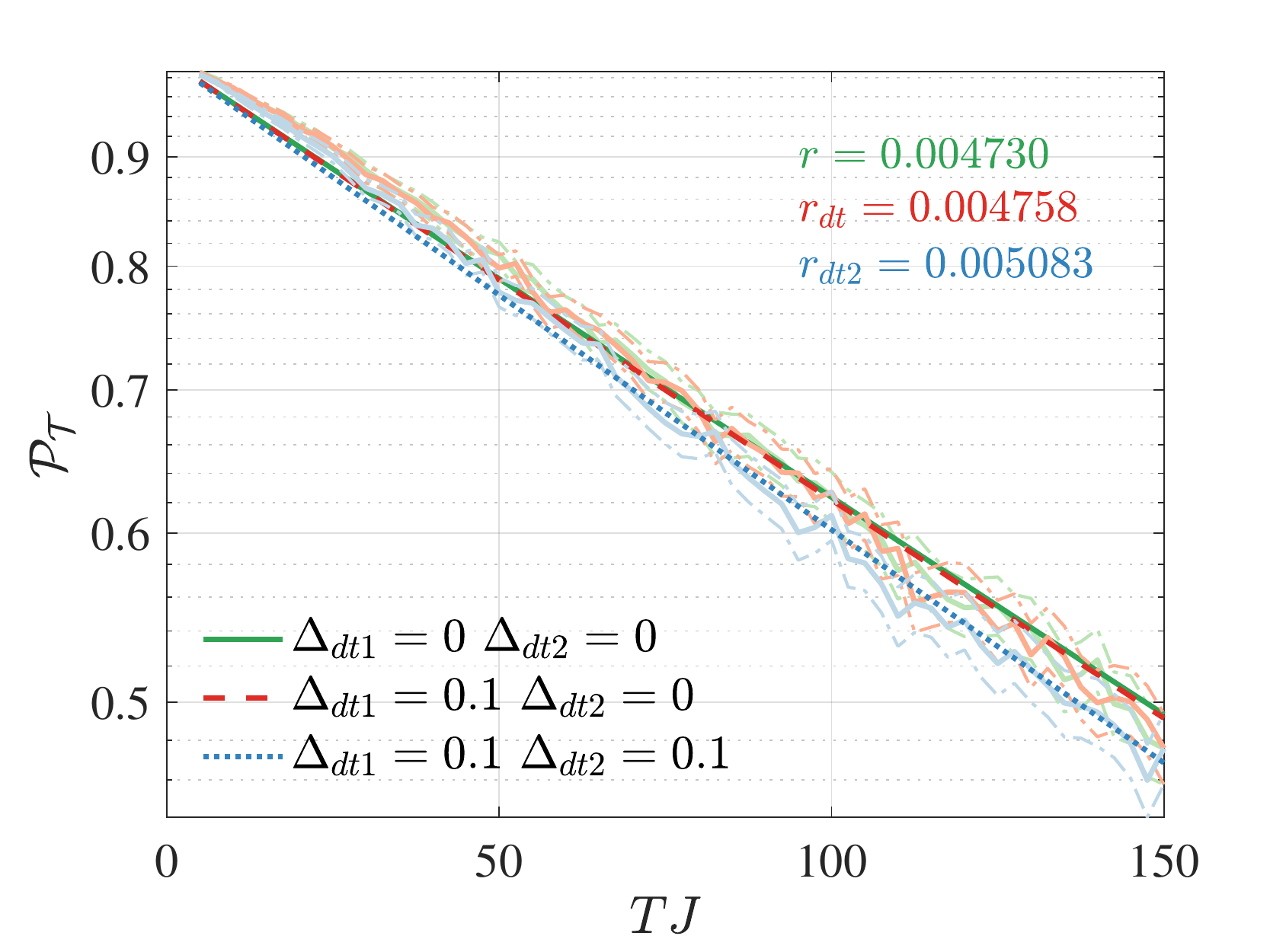} 
	\caption{Survival probability for a system with nearest-neighbour interactions and a global disorder $\zeta_{k}^{g} = \Delta_{k}\sum_{ij} \sigma_{i}^{x}\otimes\sigma_{j}^{x}$ with  $N=6$, $dt=0.005$, $\Delta B=\Delta J=0$, $\sigma_J=0.2$, $R=20$ and $n_{seq}=100$ for an imperfect implementation of the application time of a single unitary with uncertainties forwards, $\Delta_{dt1}$, and backwards, $\Delta_{dt2}$. All results exhibit similar decays and are close to the ARB curve proposed.}
	\label{fig: dtnoise}
\end{figure}
First, in Fig.~\ref{fig: dtnoise} we study how robust our characterization is in the presence of some uncertainty in the time $dt$ for which these random Hamiltonians are evolved and applied on the state. 
This could arise physically from a limited time resolution in the quantum hardware. We consider the case where every unitary is applied for a given time with an uncertainty of $\Delta_{dt1}/dt=0.1$ in the forward evolution whilst inverted with the exact same time-step. We then consider the case when the backwards evolution has the same uncertainty $\Delta_{dt2}/dt=0.1$ and so every $H_k$ is inverted for a slightly different time. We compare these weakly time-dependent models to the fixed $dt$ case, and observe that in all cases the decay profiles remain similar and fit the proposed ARB curve (see Eq.~\ref{eqn: arbcurve}). In the case of $\Delta_{dt2}=0$ (dashed line), since any variation in the forward evolution is matched by the perfect time inversion the results remain unchanged and the estimated average error rate $r_{dt1}$ %remains on the 
is in the same confidence interval as $r$ from fixed $dt$. Only when this uncertainty between the forward and backwards evolution differs (dotted line) do we observe a drop in the overall survival probability as expected from the unmatched time evolutions in both directions. Despite the presence of this weak time dependence the data exhibits similar agreement with the ARB curve which is a positive indication towards the reliability of our protocol.

\subsubsection{Spontaneous Emission}
As mentioned previously, the affect of the ARB process on one noise model is not enough to indicate that our unitaries depolarise that channel. We therefore consider (again, on our globally disordered nearest-neighbour model) in Fig.~\ref{fig: spontaneousemission} noise from spontaneous emission , an example of coupling of the quantum device to its environment. We model the time evolution of the open quantum system through quantum trajectories \cite{Daley2014}, with a ratio of $\gamma/J = 0.01$, which is compatible with experiments. We expect the average error rate for this noise model to be notably higher, as the dissipation term (which is $\propto\sigma_i^-$) that represents an emission event would transfer the state of the system to states with a much smaller overlap with the initial state. However, since the noise modelled is still chosen to be independent and uniform we expect that the results would fit the ARB curve, Eq.~\ref{eqn: arbcurve}, if all necessary conditions are met by our generated set. We see that the presence of dissipation in the form of spontaneous emission does not impact the profile of the ARB decay curve, and as expected the average error rate is much larger than that found with no dissipation. This is yet another case that can be seen to indicate that the error channel is being simplified and characterised by our protocol.
\begin{figure}[tb]
	\centering
	\includegraphics[width=11cm]{./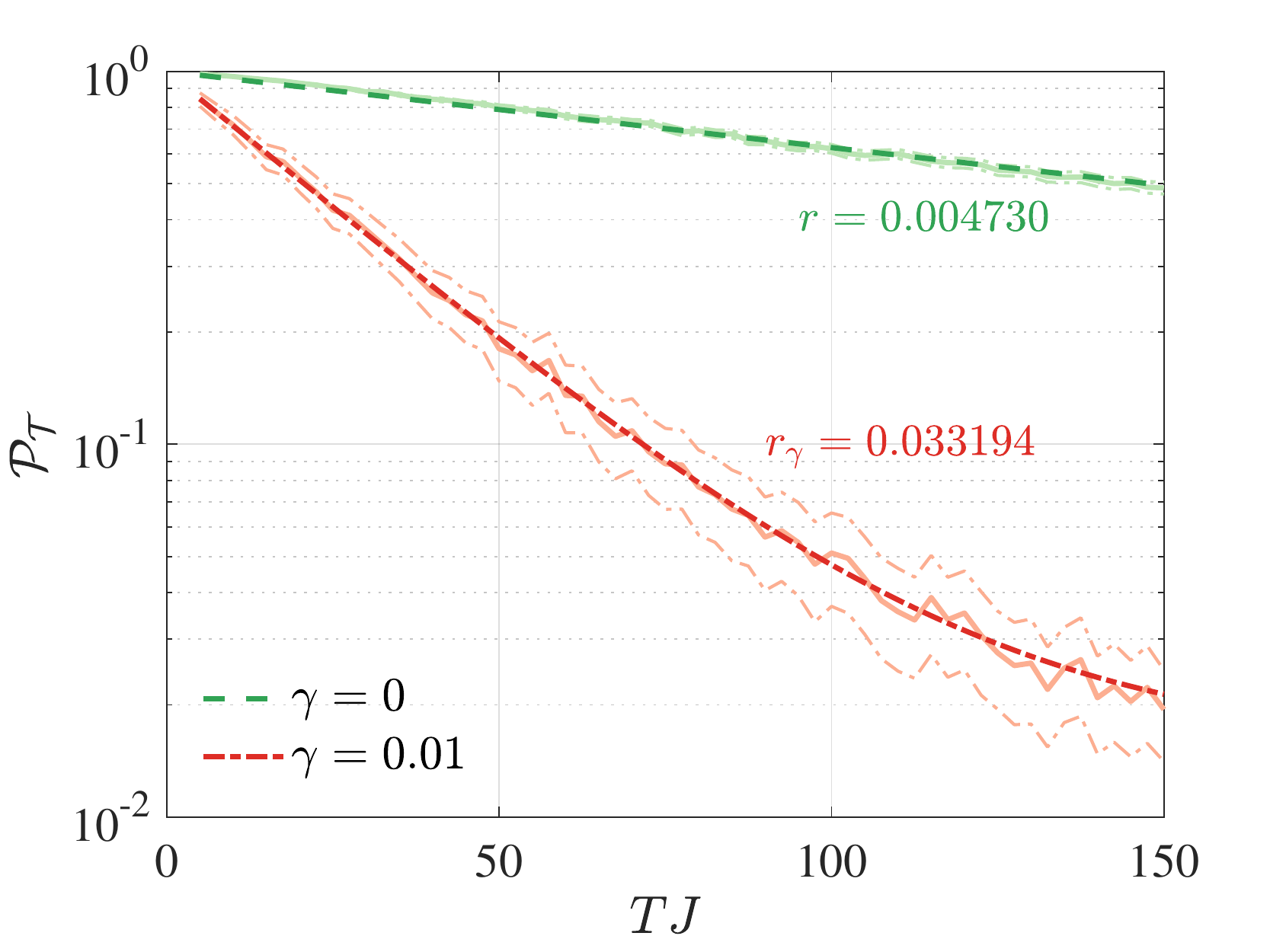} 
	\caption{Survival probability for a system with nearest-neighbour interactions and a global disorder $\zeta_{k}^{g} = \Delta_{k}\sum_{ij} \sigma_{i}^{x}\otimes\sigma_{j}^{x}$ with  $N=6$, $dt=0.005$, $\Delta B=\Delta J=0$, $\sigma_J=0.2$, $R=10$ and $n_{seq}=100$ for a system subject to spontaneous emission ($\propto \sigma_i^-$) with amplitude $\gamma=0.01$. We observe that even in the presence of dissipation the standard ARB curve describes the survival probability profile. The substantial increase in $r$ can be understood from the fact that a single dissipative $\sigma^-$ event would reduce notably the fidelity on a single run of the protocol.}
	\label{fig: spontaneousemission}
\end{figure}

\subsubsection{Noisy time-inversion}\label{sec: timereversalnoise}
Now that we have analysed how the ARB protocol is affected by both weak time-dependent noise and dissipation, it is necessary to address one of the assumptions of the implementation. Namely, in the previous results and those from Sec.~\ref{sec: furtherexplorations} we model the systematic time inversion as perfect. This choice was motivated by the fact that eliminating the time-inversion step is being explored as an extension to the project, in order to make the protocol more experimentally implementable. Nevertheless, this simplification can be regarded as unphysical in the present protocol and we therefore analyse the prospect of a noisy time-inversion. We model the same type of noise (fluctuations to the $J$ and $B$ term) on the inversion operators. In this analysis, we consider both the nearest-neighbour and all-to-all coupling with the globally disordered set only, and compare the results to those in Sec.~\ref{sec: simplecase}.
\begin{figure}[tb]
	\centering
	\includegraphics[width=12cm]{./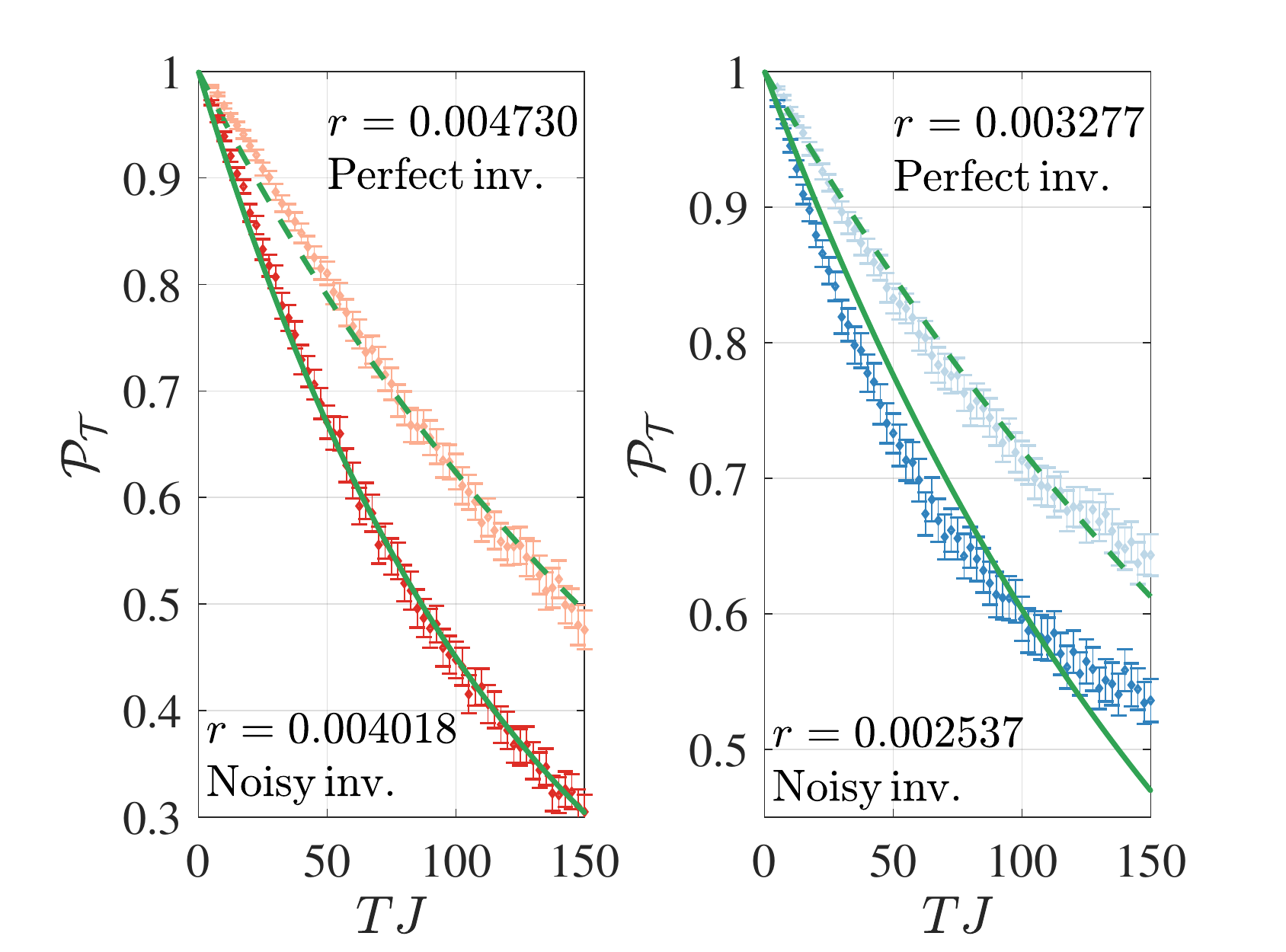} 
	\caption{(Left) Survival probability as a function of total time per sequence, and ARB decay curve fit for a system with nearest-neighbour interactions subject to a global disorder $\zeta_{k}^{g} = \Delta_{k}\sum_{j} \sigma_{j}^{x}\otimes\sigma_{j+1}^{x}$ with  $N=6$, $B=10$, $dt=0.005$, $\Delta B=\Delta J=0$, $\sigma_B=0.5$, $\sigma_J=0.2$, $R=10$ and $n_{seq}=100$. We present both perfect (dashed line) and noisy (full line) time-inversion results, where we have fit the results to the curve $P_{T} = A + B'f^{2(T - 1 \cdot dt)}$  for the noisy inversion, and the original $A + B f^{T}$ for the perfect inversion. (Right) as with (Left) but for an all-to-all coupling model.}
	\label{fig: timereversal_nn_aa}
\end{figure}
In Fig.~\ref{fig: timereversal_nn_aa} we compare the survival probability decay of both ideal and noisy time inversion for nearest-neighbour and all-to-all coupling. 
We fit the noisy inversion results to the following decay curve:
\begin{equation}
    P_T = A + B' f_2^{2(T - 1\cdot dt)} \enspace ,
    \label{eqn: timereversalfit}
\end{equation}
where we set $B' = \frac{d-1}{d}$, following some reasonable assumptions, see App.~\ref{sec: noisyinvappendix}. 
We now have noise in the backward evolution, as well as the forward, therefore adding an additional error channel after each inversion operator. Following the same proof as in the simpler case, with forward noise only, we end up with ($T-1\cdot dt$) twirls of \emph{two} error channels. This means that we have ($T-1\cdot dt$) depolarising channels of the same strength, corresponding to the characterisation of a pair of error channels sequentially applied. We then get additional channels that do not scale with the time of the sequence $T$ (see details in App.~\ref{sec: noisyinvappendix}). In Fig.~\ref{fig: timereversal_nn_aa} and Eq.~\ref{eqn: timereversalfit} we add a factor of two to the curve as we assume that the two composed error channels, when twirled, become a depolarising channel that is equivalent to the product of those two individually depolarised error channels. This is not an unrealistic approximation, and was made by Emerson et al \cite{SNEwRUO} for a more restricted noise model. Interestingly, in the case of nearest-neighbours (Left, Fig.~\ref{fig: timereversal_nn_aa}) the results for noisy time inversion fit this curve much better than the perfect inversion fit its decay curve, which could be due to the factoring out of the SPAM-like errors that creates a closer fit at shorter sequences. We also observe that the average error rate $r_g  = 0.004018(0.003982, 0.004054)$ is roughly 15-16\% smaller than the average error rate from the perfect inversion fit. This is not a huge difference, considering we assume an approximation that was developed for more limited noise, and that we make some approximations to $B'$. In this case, it seems that twirling the combined error channels does not result in the product of those two twirled error channels.  Moreover, strictly speaking, the value we get as $r_g$ can not be directly interpreted as the average gate infidelity since we now have added depolarising channels, and what we actually get here is the average error strength of two error channels (halved).
\par
Comparing the noisy inversion results for the globally disordered all-to-all unitaries (Right, Fig.~\ref{fig: timereversal_nn_aa}), we see that the results veer even further from our derived decay curve. This is not surprising since we theorised that the unitary set for this model did not seem to converge to a 2-design, and twirling over the inverses would not be likely to change this. This noisy decay therefore gives some indication that our theory was correct, since  it differs substantially from the ARB curve, especially at longer times. A final point that is worth mentioning here, is that in the more general and experimentally relevant case, where the inversion error is different than the forward error $\Lambda_{e'}\neq \Lambda_e$, our analysis would again provide an average error rate of a forward and backward evolution ($\Lambda_{e'}\circ \Lambda_e$). This can then be used in different ways. For example, we could assume similar strength of noise in both directions and divide by two, and realistic implementations would have that $\Lambda_{e'} \sim \Lambda_e$ therefore this method would give us a reasonable average error rate. We could also use this to get an upper bound on the noise of the forward evolution, if we were to assume that all the errors came from the forward evolution.

\subsubsection{Impact of time-step}
Finally, we consider how the numerical time-step ($dt$) can impact the ARB curve. In Fig.~\ref{fig:dt_depend}, we present the survival probability $\mathcal{P_T}$ for different values of the time-step ($dt$) used to create the unitaries for the same system as in Sec.~\ref{sec:nn}; again the nearest-neighbour $H_{XY}$ with global disorder. The values of $dt$ were chosen in the regime where the numerical simulations have no impact on the differences between them if the system were noiseless, to avoid any numerical error contribution to the analysis. These results highlight the fact that when a given noise term is applied for a longer period of time in the system it can cause a stronger deviation from the initial state, more difficult to correct with the perfect backwards evolution. In this analysis $dt$ is, therefore, related to the ratio of change of the noise in the actual quantum device, which can affect the result since we model perfect inverses, i.e. noiseless backward evolution. For the purposes of the protocol, that the error on each unitary (gate) should not be dependent on the time it takes to run that unitary, we chose one value of $dt (= 0.005)$ fixed for all time-evolution's simulated. 
\begin{figure}[h]
	\centering
	\includegraphics[width=11cm]{./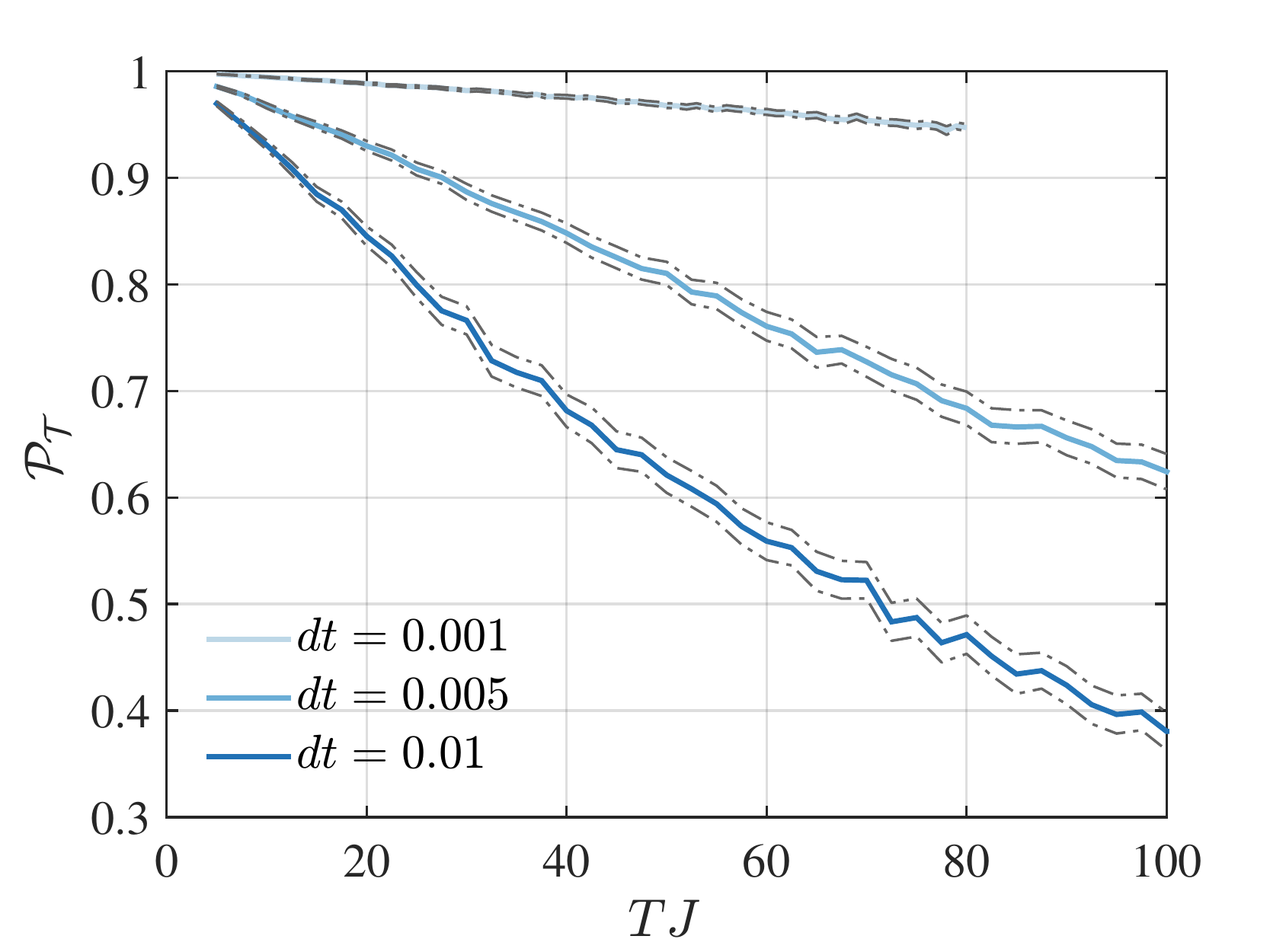} 
	\caption{Survival probability for a system with nearest-neighbour interactions and a global disorder $\zeta_{k}^{g} = \Delta_{k}\sum_{ij} \sigma_{i}^{x}\otimes\sigma_{j}^{x}$ with  $N=6$, $B=10$, $\Delta B=\Delta J=0$, $\sigma_B=0.5$, $\sigma_J=0.2$, $R=10$ and $n_{seq}=100$ for different values of $dt$. We observe that the survival probability decreases as $dt$ increases, since the noise can deviate the system from the initial state for a longer time period, leaving it harder to correct with the noiseless backward evolution.}
	\label{fig:dt_depend}
\end{figure}
\subsection{Results Overview}
The results in Sec.~\ref{sec: furtherexplorations} support the conclusions that were drawn from the simplest noise model, Sec.~\ref{sec: simplecase}. Having observed that the unitary sets in the nearest-neighbour example seemed to converge to an approximate 2-design after some time, $t$, we found that the ARB protocol was robust to more complex noise for the globally disordered nearest-neighbour unitaries which reinforces the notion that this set converges to a 2-design. Furthermore, our simulations in Sec.~\ref{sec: simplecase} reveal that the value of the magnetic field $B$ that governs the state properties of the Hamiltonian, does not affect the results of our protocol; only the form of the noise added to $B$ creates an affect. This supports the idea that our protocol is providing a measure of the noise in our (simulated) system and is robust to changes in the system parameters. The fact that the protocol with these types of unitary sets is robust to weak-time dependent noise, and dissipation is important for experimental implementation. Additionally, we notice that in the case of the globally disordered all-to-all unitaries, which we theorise in Sec.~\ref{sec: simplecase} do not converge to an approximate 2-design, the noisy inversion operators only bring the data further away from the ARB curve. This could indicate that the failure signatures are stronger when more complex noise, that does not necessarily adhere to the standard noise assumptions, is tested. Ultimately, we have created a version of randomized benchmarking with the physical capabilities of analogue quantum simulators in mind, and have found that the adapted protocol displays the behaviour we would expect under the noise tested and supports the notion that some of our sets approximate a 2-design. 

\section{Conclusions and Future work}\label{sec: conclusion}
With the aim of developing a scalable generic method for testing analogue quantum simulators, we extended randomized benchmarking to the analogue setting. By replacing the quantum logic gates in the protocol with unitary time-evolution operators (native to the quantum system) requiring that they converge to a unitary 2-design, fixing the time-step to be the same for each unitary, and systematically inverting the unitaries rather than applying one single inversion operator, we presented the analogue randomized benchmarking protocol. In the context of continuous time evolution, the challenges we met were in: i) creating a set  
of unitaries $\{U_{k}\}$ that generated an $\epsilon$-approximate 2-design and understanding how the convergence rate of these unitaries affects the protocol; ii) the generation of an efficient time-reversal of the unitaries on an analogue system. We numerically simulated our protocol on two models of the XY Hamiltonian (nearest-neighbour and all-to-all), which is native to trapped ions, adding both global and local disorder to generate the random unitary sets. We first modelled uniformly distributed fluctuations (noise) in the coupling $J$ and $B$ field terms of the static Hamiltonian. For the nearest-neighbour sets, the results fit the derived (for this noise) randomized benchmarking fidelity decay curve, particularly in the case of global disorder; this in turn indicated that the sets approximated a 2-design. For the globally disordered all-to-all case, the results did not fit the curve and it seemed that this set was not converging to a 2-design. We found the best fit to the decay curve for the local disorder (all-to-all), which we theorise is due to the richer dynamics of this set and supports the notion that this set converges faster to a 2-design. Therefore, in this case, we compared the average error rate to the average infidelity per gate and found the error rate predicted by our protocol was as expected for this type of gate and time-independent noise. Moreover, the robustness of the ARB decay curve was tested against weakly time-dependent noise, dissipation and an imperfect time-reversal scheme. We observed that for all scenarios of the globally disordered nearest-neighbour set, the proposed decay was suitable to describe the noise channels. For the globally disordered all-to-all unitaries, the imperfect time-reversal revealed further deviation from the curve, giving credence to our interpretation that this set does not converge to a 2-design.
\par 
Analogue randomized benchmarking creates opportunities for improving confidence in analogue quantum simulators by providing alternatives to the current benchmarking techniques. Assuming one has an $\epsilon$-approximate 2-design and the sequences can be efficiently inverted, we could compare the average error-rate $r$ across two quantum devices with the same starting Hamiltonian ($H_{s}$) that the set is built around; since ARB is primarily a test of a specific quantum hardware, $r$ could provide information about what kind of noise were present in each device depending on the results of the protocol on both. Another area that ARB could be useful in is random circuit sampling, where the ARB parameter $r$ could potentially be used to 
prove that random sampling from a random circuit is hard; with future works looking at this direction. At the root, ARB provides a measure for the performance of a set of unitaries on a specific hardware, and in the analogue setting this could be useful in testing programmable analogue quantum simulators. Particularly, the value of $r$ would give a characterisation of how ones device will run a family of Hamiltonians, providing an extra security in the results you would gain from a programmable AQS experiment. 
\par
Extending RB to the analogue setting highlighted many interesting research questions, particularly in regards to approximate unitary t-designs with unitary time-evolution operators. In our work, we assume an $\epsilon$-approximate 2-design is formed from our disordered set of unitaries $\{U_{k}\}$ (formed from disordered Hamiltonians $\{H_{k}\}$) because the disorder added was such that it should be sufficient to break the symmetry of the system Hamiltonian. However, we have not formally proven that our unitary sets $\{U_{k}\}$ are $\epsilon$-approximate 2-designs and we therefore introduced a bound on the results garnered from the ARB protocol. This at least allows us to assess our results for the average error rate within a relative context, and with the standard error on our result we bound the unknown parameter $\epsilon$. Perhaps the RB parameter $r$ could provide an indication of the value of $\epsilon$ for a set of unitaries that categorically are an $\epsilon$-approximate 2-design. An extension to this work is to formally define generating an $\epsilon$-approximate unitary 2-design from a set of unitaries formed around a Hamiltonian native to an AQS. The relations between the frame potential (see Eq.~\ref{eqn: framepotentialunitary}) and the Haar moment operator (see Eq.~\ref{eqn: exactsecondmoment}) \cite{Hun19, Low10} that more accurately characterises an $\epsilon$-approximate 2-design could provide a way
to optimise the generation of approximate designs in the analogue setting. Moreover, exploring the types of disorder that one can add to the starting Hamiltonian, i.e. more locally-addressed, could reveal the optimal type of disorder that generates an $\epsilon$-approximate 2-design with a given Hamiltonian. Furthermore, in this work we have shown that we can provide reasonable fits to the survival probability decay in the analogue setting for small system sizes; however, a relevant point to investigate is how the sampling, both in number of sequences $n_{seq}$ and repetitions per sequence $R$, would exactly scale as a function of the system size.
\par 
Another area to investigate is the limitation of the time-reversal (mentioned in Sec.~\ref{sec:timereversal}) where we have acknowledged that systematic inversion could still provide a measure of the average error and that the main obstacle, in our point of view, to implementing our protocol is the fact that time-reversal in analogue devices is currently not feasible, although it can be implemented for a restricted set of operators, e.g. field terms. For small scale systems, one can compute the ideal output of running sequences on that system and estimate the fidelity of the output state with the ideal state, i.e. using DFE techniques or efficient tomography. This would mitigate the need for the inversion step in our protocol, and the benefit with this type of hybrid technique would be removing the SPAM errors from the characterisation; though, unfortunately, losing the scalability advantage of ARB. On trapped-ion simulators in particular, digital and analogue computations may be performed, and therefore it would be prudent to look at the difference in errors found with both techniques: a possibility for ARB would be to implement the inversion in a Trotterised (digital) way and combine the analogue and digital techniques in order to better characterise the types of errors on this kind of device. 
\par 
The advantages that DRB (Sec. \ref{sec: RB}) and ARB (Sec. \ref{sec: analogue}) have in common are that they evaluate, in a scalable way, the performance of a device whilst also removing the fixed imperfection of the SPAM errors. Comparatively, the use of native gates of the systems means that it is likely that ARB will have smaller errors, e.g. in compilation of gates/more noise that does not adhere to RB assumptions, than digital RB. This could be especially prevalent when dealing with the same physical system used for both analogue and digital quantum simulations. The assumption of gate-independence, and even of \emph{nearly} gate-independence, of the noise model is far better motivated (and closer to reality) in the analogue case which means it is more likely that when experimentally implemented, ARB would give a better fit to the fidelity curve than in the digital case. Moreover, ARB could provide a way to test the performance of digital quantum simulators where researchers could focus on the average error-rate per length of computation time, rather than per-gate. This type of characterisation is not only more physically motivated, but could also bring this analysis closer to the adiabatic model of quantum computation, where complexity is considered in regards to the time taken for the adiabatic evolution.
\section{Acknowledgements}
We thank the anonymous reviewers for their thoughtful comments and time. We thank Ulysse Chabaud, Andreas Elben, Martin Kliesch, Rawad Mezher and Hendrik Waldner for helpful discussions and clarifications. Work at the University of Strathclyde was supported by the EPSRC Programme Grant DesOEQ (EP/P009565/1) and the European Union’s Horizon 2020 research and innovation programme under grant agreement No. 817482 PASQuanS. EK acknowledges support from the following: EPSRC Verification of Quantum Technology grant (EP/N003829/1) and UK Quantum Technology Hub: NQIT grant (EP/M013243/1) and the EU Flagship Quantum Internet Alliance (QIA) project. ED acknowledges support from the Doctoral Training Partnership (EP/N509711/1) under project No.1951737.
\bibliographystyle{apsrev}
\bibliography{Bibliography}
\appendix 
\section{Randomized Benchmarking} \label{sec: additionalsec}
Running a unitary gate $U$ on a physical device corresponds to a quantum channel denoted as $\Lambda_{U}$. The action of this quantum channel can be decomposed in two parts: 
\begin{equation}
    \Lambda_{U} := \Lambda_{U,e} \circ U \enspace , 
\end{equation}
where $\Lambda_{U,e} = \Lambda_{U} \circ U^{\dagger}$, capturing the errors that differentiate $U$ from $\Lambda_{U}$. We use this convention to describe the imperfect channel as one that firstly applies the correct gate ($U$) followed by $\Lambda_{U,e}$, the error superoperator. In our protocol, we define  $\Lambda_{U_{k_{i}}}$ as the imperfect implementation of a chosen unitary $U_{k_{i}}$. 
\par 
To introduce a term commonly used in the literature, the probability that an initial state survives a quantum process is known as the \emph{survival probability}. The survival probability of a channel $\Lambda_{U_C}$, where $U_C$ is \emph{any} unitary circuit, given a fixed initial state $\rho_{\psi}$ is:
\begin{equation}
\label{eqn: traceprobability}
    P := \bra{\psi}\Lambda_{U_C}(\rho_{\psi})\ket{\psi} = Tr\left(E_\psi \Lambda_{U_C}(\rho_\psi)\right) \enspace ,
\end{equation}
where $E_\psi$ is the projection on the state $\ket{\psi}$.
Specifically in RB, we apply a sequence of imperfect unitaries followed by their (imperfect) inverse so that we have:
\begin{equation}
    P=\bra{\psi}\Lambda_{U^\dagger,e}\circ U^\dagger \circ \Lambda_{U,e}\circ U(\rho_\psi)\ket{\psi} \enspace .
\end{equation}
It is clear to see that this probability is equal to unity if both the noise of the forward $\Lambda_{U,e}$ and the backward channels $\Lambda_{U^\dagger,e}$
are the identity (noiseless), since in that case we just evolve the state $\rho_\psi$ by $U^\dagger\circ U=\mathbb{I}$. The average fidelity of the quantum channel (over all pure states) is defined as:
\begin{equation}
   F(\Lambda_U,U)=\int d\psi \bra{\psi}U^\dagger \Lambda_U(\ket{\psi}\bra{\psi})U\ket{\psi} \enspace ,
\end{equation} 
and the average fidelity of a gate-set is given by $\int_U d\mu(U) F(\Lambda_U,U)$. The relevant quantity that we are interested in extracting is the average error-rate of a gate-set (on a specific hardware) which is simply one minus the average fidelity of the gate-set:
\begin{equation}
    r:=1-\int_U d\mu(U) F(\Lambda_U,U) \enspace . 
    \label{eqn: r}
\end{equation}
\noindent\textit{Why it works:} \\ 
Experimentally, we obtain the average survival probability $P_l$ for each length $l$ (step \ref{a1step3}), summing over all the sequences of the same length $l$. By the 2-design property of our unitary set $\{U_{k}\}$ we have:
\begin{eqnarray}
P_l&=&\frac{1}{n_{seq}}\sum_{\textrm{sequences}} \bra{\psi_0}\Lambda_{U_{tot}^\dagger}\Lambda_{U_l}\circ \Lambda_{U_{l-1}}\circ \cdots \circ \Lambda_{U_{1}}(\rho_{\psi_0})\ket{\psi_0}\nonumber\\
&=& \int dU_1\cdots \circ dU_l\bra{\psi_0}\Lambda_{U_{tot}^\dagger}\Lambda_{U_l}\circ \Lambda_{U_{l-1}}\circ\cdots \Lambda_{U_{1}}(\rho_{\psi_0})\ket{\psi_0} \enspace ,
\end{eqnarray}
where $\ket{\psi_0}$ is the initial state of the system.
Note that we have expressed the imperfect inversion operator as a single gate $\Lambda_{U^\dagger_{tot}}$. Decomposing the errors and assuming that they are gate and time-independent $\Lambda_{U,e}=\Lambda_e$, leads to:
\begin{eqnarray} 
\label{eqn: ltimesprobability}
P_l&=& \int dU_1\cdots dU_l\bra{\psi_0}\Lambda_s \circ U_1^\dagger\cdots \circ U_l^\dagger \circ \Lambda_{e}\circ U_l\circ \Lambda_e\circ U_{l-1}\cdots \circ \Lambda_e \circ  U_{1}(\rho_{\psi_0})\ket{\psi_0} \enspace .
\end{eqnarray}
Integrating over $U_l$ twirls one channel $\Lambda_{e}\rightarrow\Lambda_{e,t}$, where $\Lambda_{e,t}$ is the depolarised (twirled) channel corresponding to $\Lambda_{e}$ and the probability that characterises this channel is $p_{e}$ (see Eq.~\ref{eqn: productchannel} and Eq.~\ref{eqn: averageoveru}).
One can then integrate one-by-one the $U_{k}$'s, where each of the integrals result in one error term being twirled and hence depolarised. Noting that the twirled (depolarised) channels commute with all other channels in general and specifically with the unitaries appearing in the above expression, we obtain: 
\begin{equation}
P_l= \bra{\psi_0}\Lambda_{s}\circ (\Lambda_{e,t})^l (\rho_{\psi_0})\ket{\psi_0} \enspace .
\label{eqn: ltimesprob_1}
\end{equation}
Here, $\Lambda_{s}$ represents the error channel corresponding to the SPAM errors. Since the imperfect inverse $\Lambda_{U_{tot}^{\dagger}}$ is one single operator (or at the very least it will be composed of far less gates than the forward sequence) the error associated with it can be absorbed into the SPAM errors. These errors can also be treated as a depolarising channel, because the state is measured in the basis $\{\ket{\psi}\bra{\psi},I-\ket{\psi}\bra{\psi}\}$ and the corresponding ``off-diagonal'' terms do not affect the probabilities that we measure (and need for the subsequent estimations). 
This SPAM error depolarising channel ($\Lambda_{s}$) is characterised by the parameter $p_s$, leading to:
\begin{equation} 
\begin{split}
    P_l &= p_sp_e^l+(1-p_sp_e^l)\frac{1}{d} \\
    &=\frac{1}{d}+(\frac{d-1}{d})p_sp_e^l \enspace ,
    \label{eqn: productchannel}
\end{split}
\end{equation}
which is in the exact form $P_l=A+Bf^l$ mentioned in step \ref{a1step4}, where $f=p_e$, \ $A=1/d$, and \  $B=(\frac{d-1}{d})p_s$. By plotting $P_{l}$ for different values of $l$ we recover the value of $p_e$ ($f$). Having obtained the depolarising probability of the error-channel, we can now look at the average fidelity of the gate-set: 
\begin{equation}
\begin{split}
             \int_{U} d\mu(U) F(\Lambda_{U}, U) &= \int_{U} d\mu(U) F(\Lambda_{U,e}, I) \\
             &= \int_{U} d\mu(U) F(\Lambda_{e}, I) \enspace , \\
\end{split}
\end{equation}
and due to the left-invariance of the Haar measure, we have that $F(\Lambda_{e},I)=F(\Lambda_{e,t},I)$, i.e. the fidelity of any superoperator ($\Lambda_{e}$) with the Identity ($I$) is equal to the fidelity of its exact Haar twirl ($\Lambda_{e,t}$) with the Identity ($I$) \cite{Nielsen02}. Therefore, with the simplifying assumptions made, it is clear to see how the average fidelity is related to $p_e$, since
 \begin{eqnarray}
             \int_{U} d\mu(U) F(\Lambda_{U}, U)                                           
             &=&\int_{U} d\mu(U) F(\Lambda_{e,t}, I)\nonumber\\ 
                                                &=& \int_{U} d\mu(U) ( p_{e} + \frac{1-p_{e}}{d})\nonumber\\
                                                &=& \frac{1}{d} + (\frac{d-1}{d})p_{e} \enspace .
      \label{eqn: averageoveru}
\end{eqnarray} 
Recalling that $r := 1 - F_{ave}$ (see Eq. \ref{eqn: r}) we get the expression of step \ref{a1step5} for the average error-rate of the gate-set: $r= (d-1)(1-p_e)/d$.
\section{Noisy Time-inversion ARB}\label{sec: noisyinvappendix}
Here, we analyse the decay curve in the presence of noisy inversion operators. The sequences that we apply in this scenario are of the following form:
\begin{equation}
    \Lambda_e \circ U_{1}^{-1} \circ \Lambda_e \circ \cdots \Lambda_e \circ U_{l-1}^{-1} \circ \Lambda_e \circ U_{l}^{-1} \circ \Lambda_e \circ U_{l} \circ \Lambda_e \circ U_{l-1} \circ \Lambda_e \circ \cdots \circ \Lambda_e \circ U_{1} \enspace ,
    \label{eqn: noisyinv}
\end{equation}
where for the error channel on both forward and backwards evolution, we assume gate and time-independence, i.e. $\Lambda_{U, e} \equiv \Lambda_e$ and have decomposed the errors, as in Eq.~\ref{eqn: ltimesprobability}. Now, writing the survival probability for this sequence we have:
\begin{equation}
    P_l = \int dU_{1} \cdots dU_{l} \bra{\psi_0} \Lambda_e \circ U_{1}^{-1} \circ \Lambda_e \circ \cdots \Lambda_e \circ U_{l-1}^{-1} \circ \Lambda_e \circ U_{l}^{-1} \circ \Lambda_e \circ U_{l} \circ \Lambda_e \circ U_{l-1} \circ \Lambda_e \circ \cdots \circ \Lambda_e \circ U_{1} (\rho_\psi) \ket{\psi_0} \enspace .
    \label{eqn: survivalprobinv}
\end{equation}
Integrating over $U_l$ results in three depolarising channels. Firstly, the error channel in the ``middle'' of the sequence is twirled over all the unitaries in that space (over all random unitaries in $n_{seq}$ applied here) such that: 
\begin{equation}
    \int_U dU U_{l}^{-1} \circ \Lambda_e \circ U_{l} \rightarrow (\Lambda_e)_t = p_e + (1 - p_e) \frac{1}{d} \enspace ,
\end{equation}
where $(\Lambda_e)_t$ is the depolarised (twirled) channel corresponding to $\Lambda_e$ and the probability that characterises that channel is $p_e$. For the rest of the twirls, it is two error channels that become depolarised, in the following way:
\begin{equation}
    \int_U dU U_{l-1}^{-1} \circ \Lambda_e \circ \Lambda_e \circ U_{l-1} \enspace ,
\end{equation}
which results in $l-1$ depolarising channels of the form $(\Lambda_e \circ \Lambda_e)_t = p_{2} + (1 - p_{2}) \frac{1}{d}$. The last depolarising channel comes from the error channel composed on the first inversion operator (far left in Eq.~\ref{eqn: noisyinv}). These errors can be seen as SPAM errors, and can be treated as a depolarising channel in the same way, since they are independent of sequence length. We can denote the parameter that characterises this depolarising channel as $p_e$. The resulting survival probability for sequences of this form is:
\begin{equation}
    \begin{split}
        P_l &= p_e \ p \ p_{2}^{l-1} + (1 - p_e \ p \ p_{2}^{l-1}) \frac{1}{d} \\ 
        &= \frac{1}{d} + (\frac{d-1}{d}) p_e \ p \ p_{2}^{l-1} \\ \enspace ,
    \end{split}
\end{equation}
which is in the form of Eq.~\ref{eqn: timereversalfit}, where $B' = \frac{d-1}{d} p_e \ p$ and $p_{2} = f_{2}$. For simplicity and easier direct comparison with the perfect inversion, we will assume that $p = p_e = 1$. The actual values are very close to unity. We could have instead left these parameters as variables to be extracted from fitting the curve, something that would have a low impact on our results.
The approximation we make in Sec.~\ref{sec: timereversalnoise} is that the twirl of two composed error channels is the same as the product of those error channels individually twirled (the square of the twirled error channel), i.e. $(\Lambda_e \circ \Lambda_e)_t = (\Lambda_e)_t (\Lambda_e)_t = (\Lambda_e)_t^{2}$. Therefore, in our fit we add the factor of two to the curve which gives a more direct comparison to the parameter $r$ from the original curve.  
\section{Unitary 2-designs}\label{sec: AppA}
Consider a superoperator $\Lambda$ acting on a space $M_{D}$ of D-dimensional quantum states, when $t = 2$, $\Lambda$ has a $D^{2} \times D^{2}$ dimensional matrix representation. We now define a set $U(D)$ of unitary matrices on this space $M_{D}$. If the set is a unitary $2$-design then the space $M_{D}$  is reducible to two irreducible invariant subspaces. Now, we define $\Lambda$ acting on a quantum operator $X$ as: $\Lambda(X) = AXB$. Schur's Lemma \cite{Schur} implies the following (reducible representation) for $U(D)$-invariant trace-preserving operators:
\begin{equation}
    \Lambda(X) = p X + (1 - p) Tr(X) \frac{\mathds{I}}{D} \enspace ,
\end{equation}
where $p = \frac{Tr(\Lambda)-1}{D^{2}-1}$. Considering the fact that a unitary 2-design means that sampling uniformly from the set $\{U_{1}, ..., U_{K}\}$ is operationally equivalent to sampling from the Haar measure, we can say that \cite{ExactandApprxDesigns}: 
\begin{equation}
    \frac{1}{K}\sum_{k=1}^{K} U_{k}^{\dagger} A U_{k} X U_{k}^{\dagger} B U_{k} = \int_{U(D)} dU U^{\dagger} A U X U^{\dagger} B U \enspace ,
\end{equation}
for all $A, X, B \in L(\mathds{C}^{D})$. This essentially means that if we have a set $\{U_{k}\}$ that is a unitary 2-design or above, then conjugating a quantum channel over this set and averaging will result in a depolarisation of that channel.

\section{Parameters Convergence}\label{sec:parameter_conv}
\begin{figure}[!htb]
  \centering
  \includegraphics[width=10cm]{./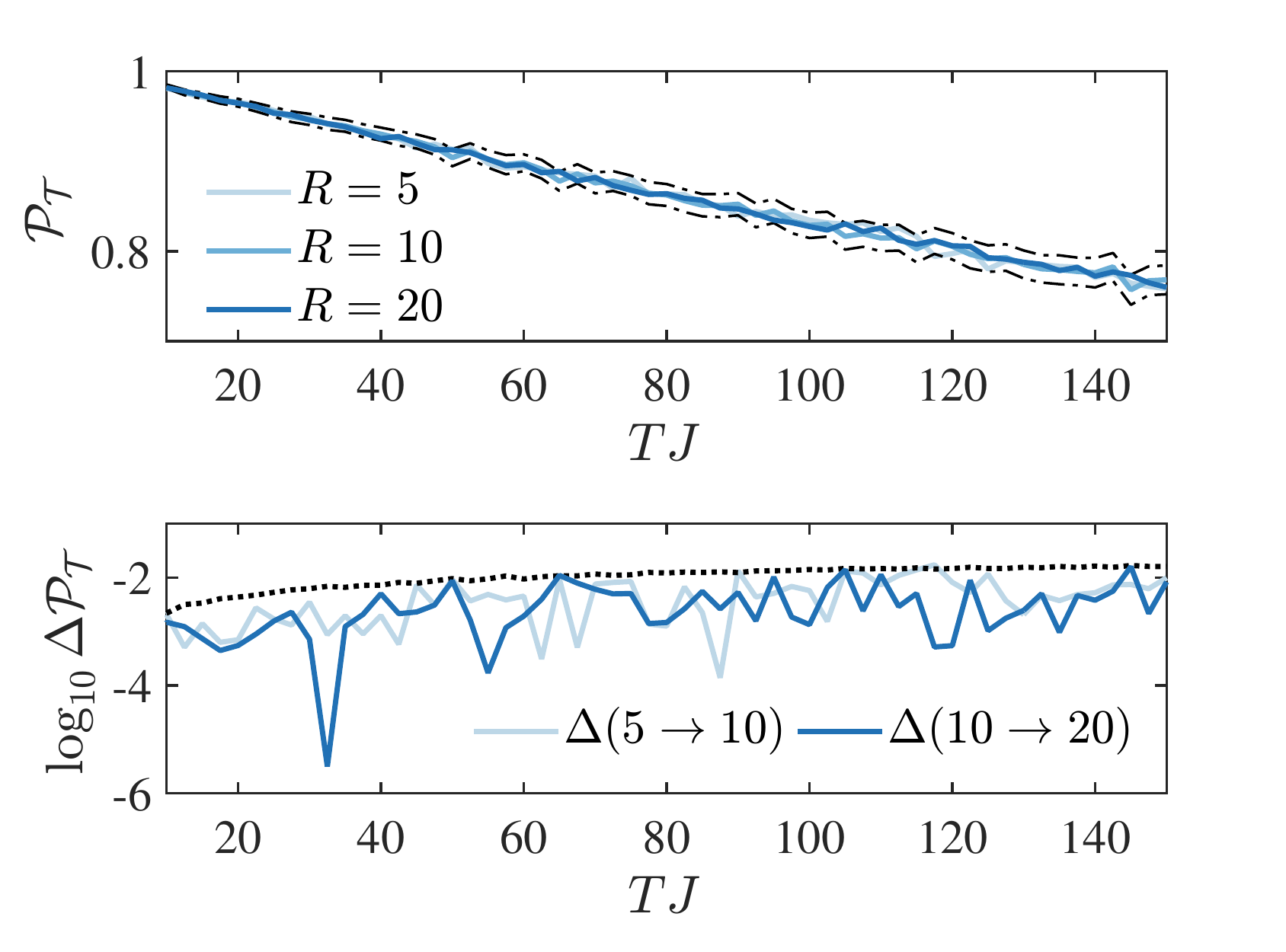} 
  \caption{(Top) Comparison of ARB decay curve for a system with $M=6$, $B=10$, $J=1$, $dt=0.005$, $\Delta B=\Delta J=0$, $\sigma_B=0.5$, $\sigma_J=0.2$ and $n_{seq}=100$ for varying repetitions of the same random sequence $R=5,10,20$. We observe how the average results overlap and remain within the errorbar interval for all R values. We display the error for $R=10$ only to help visualizing the curves; (Bottom) Logarithmic differences between $R=5,10$ and $R=10,20$ curves, we compare this with the error bar of $R=10$ (dotted line). We observe that differences with $R$ are smaller than the statistical uncertainly of the curves.}
  \label{fig:Fig_Rcon}
 \end{figure}
 
Changes to some of the method parameters, such as 
the repeated runs of each random sequence ($R$) and the number of sequences tested for each sequence length ($n_{seq}$) would improve 
the accuracy of our results: iterations over as many sequences as possible of the same length are desired in RB to sample as uniformly from the 
unitary space as possible and repeating each sequence sufficiently gives a more accurate measure of the average survival probability for that sequence. 
In this section we discuss the choice of numerical parameters in the results presented in the main text. Here we describe how the RB curves depend on some of the averaging parameters such as $R$ and $n_{seq}$.
\begin{figure}[!htb]
  \centering 
  \includegraphics[width=10cm]{./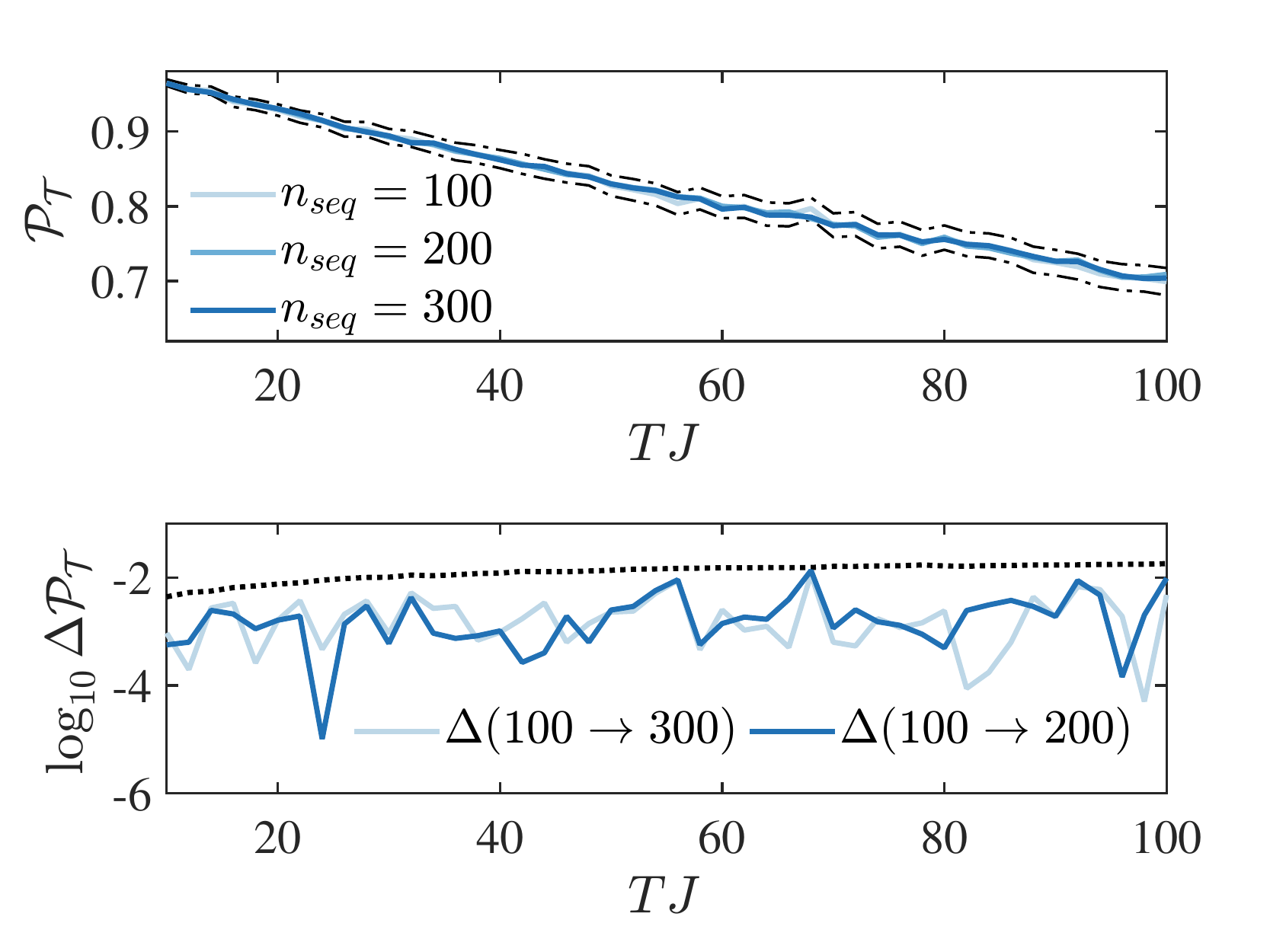} 
  \caption{(Top) Comparison of ARB decay curve for a system with $M=6$, $B=10$, $J=1$, $dt=0.005$, $\Delta B=\Delta J=0$, $\sigma_B=0.5$, $\sigma_J=0.2$ and $R=10$ for varying $n_{seq}=100,200,300$. Similarly, to the case of $R$, we observe good convergence with the range of $S_T$ studied and the average result are within the error intervals. We display the error for $n_{seq}=100$  only to help visualizing the curves; (Bottom) Logarithmic differences between $n_{seq}=100,300$  and $n_{seq}=100,200$  curves, we compare this with the error bar of $n_{seq}=100$  (dotted line). The differences between the curves lay again under the statistical uncertainty.}
  \label{fig:Fig_nseqcon}
 \end{figure}
 \par
 In Fig.~\ref{fig:Fig_Rcon} and Fig.~\ref{fig:Fig_nseqcon}, we justify the choice of the numerical parameters $n_{seq}=100$ and $R=10$ in the main text by comparing the ARB decay curves for varying values of the mentioned variables. We observe that in both cases the statistical uncertainty derived from the sequence averaging is larger than the discrepancy as we vary these parameters, therefore we are confident that the presented results do not depend on the chosen values for $n_{seq}$ and $R$.

\section{Comparative Techniques for 2-designs}
In addition to our discussion in Sec.~\ref{sec:unitaryset}, here we highlight some of the comparative techniques used to determine whether one has an exact unitary 2-design beginning with the introduction of the Spherical t-design.
\par 
Consider a real function $f$, and imagine we are interested in the average value of this function on an $n$-dimensional real sphere $S^{n}$; this is hard to compute, so one can think of averaging over a finite set of unit vectors $D = \{ \ket{\phi_{1}}, \dots , \ket{\phi_{K}}\}$ instead. Briefly, a spherical t-design is a finite subset $D$ of $S^{n}$ such that the average of every $t$-th order polynomial $p$ over $S^{n}$ is equal to the average of $p$ over $D$. For spherical t-designs, the frame potential \cite{spherical} is a well-known metric for determining whether one has an exact spherical design or not, and is defined as follows: \\
\begin{defn} [\emph{Spherical t-design}]
 A set of vectors $\{\ket{\phi_{1}}, \dots , \ket{\phi_{K}}\}$ is a spherical 2-design in $\mathds{C}^{d}$ if and only if:
 \begin{equation}
     \sum_{k, k'} \frac{|\bra{\phi_k}\ket{\phi_{k'}}|^{4}}{K^{2}} = \frac{2}{d^{4} + d^{2}} \enspace .
 \end{equation}
 \end{defn}
The definition of spherical $t$-designs was modified for the unitary setting, changing the real sphere $S^{n}$ to a set of unitaries $U(D)$ and comparing with the Haar distribution, with the term \textit{unitary t-design} coined by Dankert et al \cite{unitarytdesignterm}. 
\par 
Adapting the frame potential from the spherical setting to the unitary setting, D. Gross et al \cite{framepotential} developed the frame potential technique for determining whether ones set of unitaries is an exact unitary 2-design or not, see Definition~\ref{defn: framepotential}. Another technique for determining whether the (suspected) unitary 2-design that one has in an exact unitary 2-design, is to compare the moment operators up to order $t$ (in this case 2) of the particular unitary set and the Haar measure, we refer to \cite{HarrowAPPX} for Definition~\ref{defn: secondorder}.
\begin{defn} [\emph{Frame Potential}] \label{defn: framepotential}
Let the set $\mathds{M} = \{U_{k}\}$ with $\{k = 1, ..., K\}$ be a set of unitaries. The frame potential of $\mathds{M}$ is defined as:
\begin{equation}
    P(\mathds{M}) = \sum_{U_{k}, U_{k'} \in \mathds{M}} \frac{| tr(U_{k}^{\dagger} U_{k'})|^{4}}{K^{2}} \enspace ,
    \label{eqn: framepotentialunitary}
\end{equation}
$\mathds{M}$ is an exact unitary 2-design $\iff P(\mathds{M}) = 2$.
 \end{defn}
\begin{defn}[\emph{Second order moment operators}] \label{defn: secondorder}
    A degree ($t,t$)-monomial in $C \in U((\mathds{C}^{d})^{\otimes n})$ is degree $t$ in the entries of $C$ and degree $t$ in the entries of $C^{*}$. Setting $t = 2$, and collecting all these monomials into a single matrix of dimension $d^{2n2}$ by defining $C^{\otimes 2, 2} := C^{\otimes 2} \otimes C^{* \otimes 2}$, we state that $\alpha$ is an exact unitary $2$-design if expectations of all $(2,2)$ moments of $\alpha$ match those of the Haar measure: 
\begin{equation}
    G_{\alpha}^{(2)} = \mathds{E}_{C \sim \alpha} [C^{\otimes 2} \otimes C^{*\otimes 2}]
    \enspace .
    \label{eqn: secondmoment}
\end{equation}
Therefore, $\mu$ is an exact unitary $2$-design if and only if:
\begin{equation}
    G_{\alpha}^{(2)} = G_{\mu}^{(2)} \enspace ,
    \label{eqn: exactsecondmoment}
\end{equation} 
Where $\mu$ is the Haar distribution.
\end{defn}

\section{Proofs for Section\texorpdfstring{~\ref{sec:unitaryset}}{III. A}}\label{sec: app_proof}
\subsection{Proof of Theorem\texorpdfstring{~\ref{thm: survival-prob-bound}}{III.1}}
In order to prove Theorem.~\ref{thm: survival-prob-bound} we first present the following definitions: 
\begin{defn}
\label{def:trace-norm}
The trace-norm of a quantum channel $\mathds{E}$ in terms of the input state density matrix $\rho$, that minimises the error probability on distinguishing between two quantum channels $\mathds{E}_{1}$ and $\mathds{E}_{2}$, is defined as:
\begin{equation}
    \|\mathds{E}\|_{1} := \max_{\rho} \|\mathds{E}(\rho)\|_{1} \enspace ,
\end{equation}
where $\| \cdot \|_{1}$ is the trace-norm, i.e. $\|X\|_{1} = Tr \sqrt{X^{\dagger} X}$. 
\end{defn}
\begin{defn}
\label{def:diamond-norm}
The diamond-norm distance written in terms of the trace-norm of a quantum channel $\mathds{E}$ is as follows:
\begin{equation}
\begin{split}
    \|\mathds{E}\|_{\diamond} &= \| I \otimes \mathds{E}\|_{1} \\
    &\geq \|\mathds{E}\|_{1} \enspace .
\end{split}
\end{equation}
\end{defn}
\begin{defn}
\label{def:surv-prob-ed}
The average survival probability for each sequence length found from RB is:
\begin{equation}
    P^{\mu}_{l} = A + B f^{l} \enspace ,
\end{equation}
with a pure input state $\rho$, and using an \textit{exact} unitary 2-design, where the average error rate of the unitaries $\{U_{\mu}\}$ is $r = (d-1)(1-f)/d$ (see conversations surrounding, and including, Eq. \ref{eqn: averageoveru}). \par 
Similarly, the average survival probability for each sequence length, measured for an unknown $\alpha$-distribution of unitaries via RB, is:
\begin{equation}
\label{eqn: deltasurv-prob-ed}
    P^{\alpha}_{l} := P_l^\mu \pm \delta P_l \enspace ,
    \end{equation}
where the unitaries are assumed to be an $\epsilon$-approximate 2-design and $P_{l}^{\tau} = Tr[E_{\psi} \mathds{E}_{\tau} (\Lambda)^{l} (\rho_{\psi})], \tau \in \{ \alpha, \mu \}$ represents the survival probabilities of input state $\rho_{\psi}$ when the twirled quantum channels $\mathds{E}_{\tau}$ are applied to it $l$ times. 
\end{defn}
Now, we present the following Lemma's and their proofs. 
\begin{lem}
\label{lem:twir-chan}
If $\mathds{E}_{\alpha}(\Lambda)$ is the twirled channel of $\Lambda$ over a set of unitaries $\{U_{\alpha}\}$ spread according to a probability distribution $\alpha$ and $\mathds{E}_{\mu}$ is the Haar-twirl of that channel, then for an $\epsilon$-approximate 2-design, it holds that:
\begin{equation}
\label{eqn:epsilonrho}
    \|\mathds{E}_{\alpha}(\Lambda)(\rho) - \mathds{E}_{\mu}(\Lambda)(\rho)\|_{1} \leq \epsilon \enspace ,
\end{equation}
with $\mathds{E}_{\alpha}(\Lambda) (\rho) = \int_{U} d\alpha(U) U \circ \Lambda \circ U^{\dagger}(\rho)$.
\end{lem}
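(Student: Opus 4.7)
The plan is to chain together the three layers of norm inequalities that are already encoded in Definitions~\ref{def:trace-norm},~\ref{def:diamond-norm} and~\ref{def:two-design}, so that the diamond-norm bound guaranteed by the $\epsilon$-approximate 2-design property collapses down to a statement about a single fixed input state $\rho$. The difference channel $\Delta := \mathds{E}_\alpha(\Lambda) - \mathds{E}_\mu(\Lambda)$ is linear, so all the relevant norms are well-defined on it even though it is not itself a CPTP map.

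First I would invoke Definition~\ref{def:two-design} to obtain $\|\Delta\|_\diamond \leq \epsilon$ directly from the hypothesis that $\{U_\alpha\}$ is an $\epsilon$-approximate 2-design. Next, using Definition~\ref{def:diamond-norm}, I would note that $\|\Delta\|_\diamond = \|I \otimes \Delta\|_1 \geq \|\Delta\|_1$, where the inequality is obtained by restricting the maximisation defining the induced trace-norm on $I\otimes \Delta$ to unentangled inputs of the form $|0\rangle\langle 0|\otimes \sigma$ on the ancilla and system, respectively (this is the standard observation behind the $\|\cdot\|_\diamond \geq \|\cdot\|_1$ inequality).

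Finally, by Definition~\ref{def:trace-norm}, the induced norm $\|\Delta\|_1 = \max_\sigma \|\Delta(\sigma)\|_1$ upper-bounds $\|\Delta(\rho)\|_1$ for the particular state $\rho$ appearing in the lemma, so
\begin{equation*}
\|\mathds{E}_\alpha(\Lambda)(\rho) - \mathds{E}_\mu(\Lambda)(\rho)\|_1 = \|\Delta(\rho)\|_1 \leq \|\Delta\|_1 \leq \|\Delta\|_\diamond \leq \epsilon,
\end{equation*}
which is exactly Eq.~\ref{eqn:epsilonrho}.

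The only genuinely non-trivial step is the middle one, the reduction $\|\Delta\|_\diamond \geq \|\Delta\|_1$; this is a textbook fact but worth spelling out, since it relies on the fact that the maximisation defining the diamond norm ranges over \emph{all} bipartite inputs (including product states and including a trivial ancilla), whereas the induced trace-norm only maxes over inputs on the system. Everything else is purely a matter of unpacking definitions, so I expect no further obstacle.
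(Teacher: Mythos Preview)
Your proposal is correct and follows essentially the same approach as the paper's proof: both chain the inequalities $\|\Delta(\rho)\|_1 \leq \max_\sigma \|\Delta(\sigma)\|_1 = \|\Delta\|_1 \leq \|\Delta\|_\diamond \leq \epsilon$ by invoking Definitions~\ref{def:trace-norm}, \ref{def:diamond-norm} and~\ref{def:two-design} in exactly that order. The only difference is cosmetic---you introduce the shorthand $\Delta$ and add a sentence justifying $\|\cdot\|_\diamond \geq \|\cdot\|_1$ via product inputs, which the paper leaves implicit.
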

\begin{proof}[Proof for Lemma~\ref{lem:twir-chan}]
From Definitions~\ref{def:two-design} and \ref{def:diamond-norm} we obtain:
\begin{equation}
    \|\mathds{E}_{\alpha}(\Lambda) - \mathds{E}_{\mu}(\Lambda)\|_{1} \leq \|\mathds{E}_{\alpha}(\Lambda) - \mathds{E}_{\mu}(\Lambda)\|_{\diamond} \leq \epsilon 
    \enspace ,
\end{equation}
which implies:
\begin{equation}
    \|\mathds{E}_{\alpha}(\Lambda) - \mathds{E}_{\mu}(\Lambda)\|_{1} \leq \epsilon
    \enspace . 
\end{equation}
Using Definition~\ref{def:trace-norm} we state:
\begin{equation}
    \|\mathds{E}_{\alpha}(\Lambda) - \mathds{E}_{\mu}(\Lambda)\|_{1} := \max_{\rho} \|\mathds{E}_{\alpha}(\Lambda)(\rho) - \mathds{E}_{\mu}(\Lambda)(\rho)\|_{1}
    \enspace .
\end{equation}
And it holds by definition that:
\begin{equation}
\begin{split}
    &\|\mathds{E}_{\alpha}(\Lambda)(\rho) - \mathds{E}_{\mu}(\Lambda)(\rho)\|_{1} \\
    \leq &\max_{\rho} \|\mathds{E}_{\alpha}(\Lambda)(\rho) - \mathds{E}_{\mu}(\Lambda)(\rho)\|_{1} 
    \enspace .
\end{split}
\end{equation}
Therefore:
\begin{equation}
\centering
    \begin{split}
    &\|\mathds{E}_{\alpha}(\Lambda)(\rho) - \mathds{E}_{\mu}(\Lambda)(\rho)\|_{1} \\
    \leq &\max_{\rho} \|\mathds{E}_{\alpha}(\Lambda)(\rho) - \mathds{E}_{\mu}(\Lambda)(\rho)\|_{1} \\
    \leq &\|\mathds{E}_{\alpha}(\Lambda) - \mathds{E}_{\mu}(\Lambda)\|_{1} 
    \leq \epsilon
        \enspace ,
    \end{split}
\end{equation}
as required. 
\end{proof}
\begin{lem}
\label{lem:surv-prob-ed}
If the unitaries $\{U_{\alpha}\}$ form an $\epsilon$-approximate 2-design, it holds that:
\begin{equation}\label{eq:delta_Pl}
    |P_{l}^{\alpha} - P_{l}^{\mu}|=|\delta P_l| \leq l \cdot \epsilon \enspace .
\end{equation}
\end{lem}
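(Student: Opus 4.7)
The plan is to bound $|P_l^\alpha - P_l^\mu|$ by the trace-norm distance of the two $l$-fold iterated twirled channels acting on $\rho_\psi$, and then telescope, extracting one factor of $\epsilon$ per step using Lemma~\ref{lem:twir-chan}.

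First I would peel off the POVM element. Since $0 \leq E_\psi \leq I$, H\"older's inequality (or equivalently the duality between trace norm and operator norm) yields
\begin{equation*}
    |P_l^\alpha - P_l^\mu| \;=\; \Bigl|\mathrm{Tr}\bigl[E_\psi\bigl(\mathds{E}_\alpha(\Lambda)^l - \mathds{E}_\mu(\Lambda)^l\bigr)(\rho_\psi)\bigr]\Bigr| \;\leq\; \bigl\|\mathds{E}_\alpha(\Lambda)^l(\rho_\psi) - \mathds{E}_\mu(\Lambda)^l(\rho_\psi)\bigr\|_1,
\end{equation*}
so the task reduces to bounding the trace-norm difference on the right.

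Next I would apply the standard telescoping identity
\begin{equation*}
    \mathds{E}_\alpha(\Lambda)^l - \mathds{E}_\mu(\Lambda)^l \;=\; \sum_{k=0}^{l-1} \mathds{E}_\alpha(\Lambda)^{\,l-1-k}\,\circ\,\bigl(\mathds{E}_\alpha(\Lambda) - \mathds{E}_\mu(\Lambda)\bigr)\,\circ\,\mathds{E}_\mu(\Lambda)^{\,k},
\end{equation*}
evaluate at $\rho_\psi$, and use the triangle inequality. Setting $\sigma_k := \mathds{E}_\mu(\Lambda)^{k}(\rho_\psi)$, each $\sigma_k$ is a valid density operator because twirls of CPTP maps are CPTP; Lemma~\ref{lem:twir-chan} then guarantees $\|(\mathds{E}_\alpha(\Lambda) - \mathds{E}_\mu(\Lambda))(\sigma_k)\|_1 \leq \epsilon$. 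Finally, since the difference of two Hermiticity-preserving maps acting on a Hermitian operator remains Hermitian, and CPTP maps are non-increasing in trace norm on Hermitian operators, pre-composing with $\mathds{E}_\alpha(\Lambda)^{\,l-1-k}$ cannot inflate that bound. Each of the $l$ telescoping terms therefore contributes at most $\epsilon$, giving $l\cdot\epsilon$ overall and proving the claim.

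I do not expect any real obstacle here; the argument is a routine iterated-channel estimate. The only subtle point to get right is the justification of CPTP trace-norm contractivity at each telescoping step, which requires confirming that the relevant intermediate operator $(\mathds{E}_\alpha(\Lambda) - \mathds{E}_\mu(\Lambda))(\sigma_k)$ is Hermitian so that the non-expansion bound $\|\Phi(X)\|_1 \leq \|X\|_1$ applies. An alternative, slightly cleaner route would be to bound $\|\mathds{E}_\alpha(\Lambda)^l - \mathds{E}_\mu(\Lambda)^l\|_\diamond \leq l\epsilon$ directly by induction, exploiting submultiplicativity of the diamond norm and the fact that CPTP maps have unit diamond norm, and only then specialize to $\rho_\psi$; this avoids the Hermiticity bookkeeping at the cost of invoking the diamond-norm assumption from Definition~\ref{def:two-design} rather than its trace-norm consequence from Lemma~\ref{lem:twir-chan}.
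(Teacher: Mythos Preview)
Your proof is correct and takes essentially the same approach as the paper: both use a hybrid/telescoping argument, replacing $\alpha$-twirls by $\mu$-twirls one at a time and bounding each swap by $\epsilon$ via Lemma~\ref{lem:twir-chan} together with CPTP trace-norm contractivity, then summing the $l$ contributions by the triangle inequality. The paper phrases this via intermediate hybrid states $\rho_j$ obtained by swapping the integration measure in the explicit sequence representation $\rho(S_l)$, whereas you telescope directly on the iterated channel $\mathds{E}_\tau(\Lambda)^l$ as defined in the theorem statement; the substance is identical.
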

\begin{proof}[Proof for Lemma~\ref{lem:surv-prob-ed}]
Considering a fixed length $l$, we define the state after a sequence of this length, $S_{l}$, of imperfect unitaries $\Lambda_{U_{l}} = \Lambda_{e} \circ U_{l}$, has been applied to initial state $\rho_{\psi} = \ket{\psi}\bra{\psi}$ and before a measurement has been taken, as:
\begin{eqnarray}
\rho(S_{l}) &:=&\Lambda_e \circ {U_1}^{\dagger} \cdots {U_{l}}^{\dagger} \circ \Lambda_e \circ U_l \circ \Lambda_e\circ U_{l-1}\cdots \circ \Lambda_e\circ U_1 (\ket{\psi}\bra{\psi})
\enspace .
\end{eqnarray}
We set $\rho_0$ as the average state of the RB protocol before the final measurement (i.e. averaging the above expression over different sequences according to the distribution $d\alpha$):
\begin{equation}
\rho_0=\int d\alpha(U_1)\cdots d\alpha(U_l)\rho(S_l)   \enspace . \end{equation}
which is equivalent to the average channel defined in Lemma~\ref{lem:twir-chan} but for a sequence of length $l$, i.e. $\mathds{E_{\alpha}}(\Lambda)^{l}(\rho)$. Similarly, we define the following:
\begin{eqnarray}
\rho_1&=&\int d\alpha(U_1)\cdots d\alpha(U_{l-1})d\mu(U_l)\rho(S_l)\nonumber\\
&\cdots&\nonumber\\
\rho_j&=&\int d\alpha(U_1)\cdots d\alpha(U_{j-1})d\mu(U_j) \nonumber \\
& & \cdots d\mu(U_l)\rho(S_l) \nonumber \\
&\cdots& \nonumber\\
\rho_l&=&\int d\mu(U_1)\cdots d\mu(U_l)\rho(S_l)
\enspace .
\end{eqnarray}
Here, in each of the above quantum states we replace (one-by-one) the average over the distribution $\alpha$ with that of the Haar measure $\mu$. Any two consecutive states $\rho_j,\rho_{j+1}$ differ by a single integration, and by the $\epsilon$-approximate 2-design property (see Eq.~\ref{eqn:epsilonrho}) we, therefore, have that:
\begin{equation}
\label{eqn:tracestates}
\begin{split}
&\|\rho_j-\rho_{j+1}\|_{1} \leq \epsilon    \\
\implies 
&\|\rho_l-\rho_{0}\|_{1}\leq l\cdot\epsilon    
\enspace ,
\end{split}
\end{equation}
where the implication follows from the triangle inequality. \\ \\
From the definition of the trace-norm we get: 
\begin{eqnarray}
|\bra{\psi}\rho_l\ket{\psi}-\bra{\psi}\rho_{0}\ket{\psi}|&\leq& l\cdot\epsilon\nonumber\\
|P_l^\mu-P_l^\alpha|&\leq& l\cdot\epsilon
\enspace .
\end{eqnarray}
The definition of survival probability stated in Eq.~\ref{eqn: traceprobability} results in the the equivalence of the above equation. Intuitively, the difference in the probabilities, $\bra{\psi} \rho_{l} \ket{\psi}$ and $\bra{\psi} \rho_{0} \ket{\psi}$, of obtaining the states (after measurement) is no larger than $l \cdot \epsilon$, the bound specified by the trace-norm between the the two initial states, as in Eq.~\ref{eqn:tracestates}.
\end{proof}
Thus completing the proof. 
\subsection{Proof of Theorem\texorpdfstring{~\ref{thm: final_bound}}{III.2}}
\begin{lem} 
\label{lem: delta_f-ed}
Let $A$ and $B$ be known quantities. Under Assumption~\ref{assum: statisticalerror} and with a small $\epsilon$, the error in determining $f_{l}$ from the RB method of an $\epsilon$-approximate 2-design, is given by:
\begin{equation}
    \delta f_{l} \approx\frac{\epsilon}{f^{l-1}B} \enspace ,
\end{equation}
where $f$ is dependent on $l$, $f_{l} = f \pm \delta f_{l}$ is the value for the fidelity decay parameter found for RB with an $\epsilon$-approximate 2-design and $f$ is that obtained with an exact design. 
\end{lem}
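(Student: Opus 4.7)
The plan is to relate the fitted fidelity parameter $f_l$ (obtained from ARB on the $\epsilon$-approximate 2-design) to the exact parameter $f$ (obtained from the Haar-twirled channel) by propagating the survival-probability bound of Theorem~\ref{thm: survival-prob-bound} through the fitting model. Because $A$ and $B$ are assumed to be known constants, any discrepancy between $P_l^\alpha$ and $P_l^\mu$ must be absorbed into the exponent, so the natural starting point is the identity
\begin{equation}
P_l^\alpha - P_l^\mu \;=\; B\bigl(f_l^{\,l}-f^{\,l}\bigr).
\end{equation}

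First I would invoke Assumption~\ref{assum: statisticalerror} to discard the statistical fluctuation $\delta P_l^\alpha$ in front of the systematic gap, so that the left-hand side can be bounded above by $l\cdot\epsilon$ via Theorem~\ref{thm: survival-prob-bound}. On the right-hand side, the smallness of $\epsilon$ translates (through the very identity above) into $\delta f_l \ll f$, and I would Taylor-expand around $f$ to first order,
\begin{equation}
f_l^{\,l}-f^{\,l} \;=\; (f\pm\delta f_l)^{l}-f^{\,l} \;\approx\; \pm\, l\, f^{\,l-1}\,\delta f_l,
\end{equation}
dropping the $O(\delta f_l^{\,2})$ remainder. Combining these two estimates and cancelling the common factor of $l$ yields $|\delta f_l|\lesssim \epsilon/\bigl(B\, f^{\,l-1}\bigr)$, which is precisely the stated approximation.

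The step that I expect to need most care is the Taylor linearisation, since it is valid only when the quadratic remainder is dominated by the linear term, i.e.\ when $\delta f_l \ll f/l$. This is consistent with the regime in which RB is typically applied (near-unity $f$ and moderate $l$), but it degrades for very long sequences where $f^{\,l-1}$ becomes tiny; in that regime the stated expression should be read as an asymptotic estimate rather than a tight uniform bound, and one would need to justify self-consistency of the perturbative expansion a posteriori using the bound it produces. Apart from this caveat the argument is essentially bookkeeping, and no structural input beyond Theorem~\ref{thm: survival-prob-bound} and Assumption~\ref{assum: statisticalerror} is required.
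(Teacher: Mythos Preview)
Your argument is essentially the same as the paper's: both start from $P_l^\alpha-P_l^\mu=B(f_l^{\,l}-f^{\,l})$, replace the left-hand side by $l\cdot\epsilon$ via Theorem~\ref{thm: survival-prob-bound} (restated as Lemma~\ref{lem:surv-prob-ed}) under Assumption~\ref{assum: statisticalerror}, linearise $(f\pm\delta f_l)^l$ about $f$, and cancel the factor of $l$. Your explicit validity condition $\delta f_l\ll f/l$ is a slight sharpening of the paper's stated $\delta f_l/f\ll1$, but otherwise the proofs coincide.
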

Note that in the following proofs we denote the error in obtaining a quantity $C$ as $\delta C$, where a subscript is added if the error depends on some measured quantity that is not obvious. 
\begin{proof}[Proof for Lemma \ref{lem: delta_f-ed}]
From Definition \ref{def:surv-prob-ed} we have that: 
\begin{equation}
P_{l}^{\mu} \pm \delta P_{l} = A + B (f \pm \delta f_{l})^{l} 
\enspace .
\end{equation}
With Lemma \ref{lem:surv-prob-ed} and under Assumption \ref{assum: statisticalerror}, it follows that:  
\begin{eqnarray}
P_l^\mu \pm l\cdot \epsilon &=& A+ B (f\pm\delta f_l)^l\nonumber\\
f^l\pm \frac{l\cdot \epsilon}{B}&=&(f\pm\delta f_l)^l\nonumber\\
f^l\pm \frac{l\cdot \epsilon}{B}&\approx&f^l\pm l\cdot f^{l-1}\delta f_l
\enspace ,
\end{eqnarray}
where the last approximation holds if $\delta f_l/ f \ll 1$. This leads to:
\begin{equation}
\label{eqn: f_l}
\delta f_l=\frac{\epsilon}{f^{l-1}B}    
\enspace .
\end{equation}
\end{proof}
Note that $f_{l}$ and $\delta f_{l}$ can be computed separately for each different length $l$. In practice, $B$ will also depend on the SPAM errors which are (generally) unknown, and therefore it is essential to consider several different lengths to obtain a value for $f_{l}$.
\begin{lem}
\label{lem:delta_r-ed}
The error in determining the average error-rate $r_{l}$ of a gate-set that is an $\epsilon$-approximate 2-design, from the RB method, is given by:
\begin{equation}
    \delta r_{l}=\frac{\epsilon}{p_s f^{l-1}} \enspace ,
\end{equation}
as compared to that when using RB with an exact 2-design. Where $p_s$ is the parameter that characterises the depolarising channel of the SPAM errors (i.e. if no SPAM errors are present, $p_s=1$, while if SPAM errors completely depolarise the channel $p_s=0$).
\end{lem}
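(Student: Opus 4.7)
The plan is to reduce the statement to Lemma~\ref{lem: delta_f-ed} by propagating the uncertainty from $f$ into $r$ through the affine relation $r=(d-1)(1-f)/d$. First I would define the approximate-design analogue $r_l:=(d-1)(1-f_l)/d$, where $f_l = f\pm\delta f_l$ is the fidelity-decay parameter estimated from the $\epsilon$-approximate 2-design, and $r$ is the exact-design counterpart. Then by direct subtraction,
\begin{equation}
\delta r_l := |r_l - r| = \frac{d-1}{d}\,|f_l - f| = \frac{d-1}{d}\,\delta f_l,
\end{equation}
so the task reduces to evaluating this expression with the bound from Lemma~\ref{lem: delta_f-ed}, namely $\delta f_l \approx \epsilon/(f^{l-1}B)$.

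Next I would substitute the explicit value of the prefactor $B$ obtained in the digital RB derivation of Appendix~\ref{sec: additionalsec}. Tracking the twirl and absorbing the SPAM channel into a single depolarising parameter $p_s$ gave $B = (d-1)p_s/d$ (see Eq.~\ref{eqn: productchannel}). Plugging this into the expression for $\delta f_l$ yields
\begin{equation}
\delta r_l = \frac{d-1}{d}\cdot\frac{\epsilon}{f^{l-1}B} = \frac{d-1}{d}\cdot\frac{d\,\epsilon}{(d-1)\,p_s\,f^{l-1}} = \frac{\epsilon}{p_s f^{l-1}},
\end{equation}
which is precisely the claimed bound. The derivation relies implicitly on Assumption~\ref{assum: statisticalerror} (so that the measurement noise is dominated by the design-approximation error) and on the same linearisation $\delta f_l/f \ll 1$ used in Lemma~\ref{lem: delta_f-ed}, so these hypotheses transfer automatically.

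The main subtlety, rather than an obstacle, is making sure the SPAM channel is truly modelled as a depolarising channel with a single parameter $p_s$ so that $B$ takes the simple form $(d-1)p_s/d$; this is justified in Appendix~\ref{sec: additionalsec} by the measurement basis $\{\ket{\psi}\bra{\psi},I-\ket{\psi}\bra{\psi}\}$ collapsing the off-diagonal part of the SPAM deviation. Once $B$ is pinned down, the proof is a one-line propagation of error through an affine map, and it correctly recovers the no-SPAM limit $\delta r_l = \epsilon/f^{l-1}$ when $p_s=1$, while signalling a blow-up as $p_s\to 0$, consistent with the intuition that a maximally depolarising SPAM channel erases all information about $f$.
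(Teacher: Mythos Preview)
Your proposal is correct and follows essentially the same approach as the paper: both propagate the uncertainty from $f$ to $r$ via the affine relation $r=\tfrac{d-1}{d}(1-f)$, invoke Lemma~\ref{lem: delta_f-ed} for $\delta f_l=\epsilon/(Bf^{l-1})$, and substitute $B=(d-1)p_s/d$ to obtain $\delta r_l=\epsilon/(p_sf^{l-1})$. The paper's version is terser---it writes the prefactor directly as $B/p_s$ rather than $(d-1)/d$---but the logic is identical.
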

\begin{proof}[Proof for Lemma~\ref{lem:delta_r-ed}]
Similarly, given that $r=\frac{d-1}{d}(1-f)$, $B=\frac{(d-1)p_s}{d}$ and the previous result $\delta f_l=\frac{\epsilon}{B f^{l-1}}$, we obtain:
\begin{eqnarray}
\label{eqn: r_l}
r\pm\delta r_l&=&\frac{B}{p_s}(1-(f\pm \delta f_l))\nonumber\\
\delta r_l&=&\frac{\epsilon}{f^{l-1}p_s}
\enspace .
\end{eqnarray}
\end{proof}
It is clear to see that the larger the $l$ that we use to estimate $r_{l}$, the larger the error in that estimation due to the gate-set being an $\epsilon$-approximate 2-design. In our simulated results, we assume no SPAM errors ($p_s=1$), and we can therefore take the more optimistic view and consider the errors for $l=1$. Under these assumptions, it is easily seen that:
\begin{lem}
\label{lem: boundsonr}
If $l, p_{s} = 1$ the value of $r$ determined from the RB method when testing an $\epsilon$-approximate 2-design, is bounded as follows:
\begin{equation}
\begin{split}
    r' &= r \pm \delta r \\ 
    r -\epsilon &\leq r' \leq r +\epsilon \enspace ,\\
    \label{eqn: boundsforr}
\end{split}
\end{equation}
where we use $r'$ to denote the value for the average error-rate measured from using an $\epsilon$-approximate 2-design, and $r$ is the value gained when using an exact 2-design, with the RB method.
\end{lem}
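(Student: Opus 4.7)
The plan is a straightforward propagation of the probability bound from Theorem~\ref{thm: survival-prob-bound} through the fit ansatz down to the inferred error rate $r'$. First I would instantiate Theorem~\ref{thm: survival-prob-bound} at $l=1$, which yields $|P_1^\alpha - P_1^\mu| \leq \epsilon$. Assumption~\ref{assum: statisticalerror} guarantees that the measurement uncertainty on $P_1^\alpha$ is much smaller than this $\epsilon$, so the probability-level gap can be passed through the fit without extra statistical slack.

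Next I would pin down the fit constants. With no SPAM errors, the derivation in Appendix~\ref{sec: additionalsec} (specifically Eq.~\ref{eqn: productchannel} with $p_s = 1$) forces $A = 1/d$ and $B = (d-1)/d$ in the decay ansatz. Writing $P_1^\mu = A + B f$ for the exact 2-design and $P_1^\alpha = A + B f'$ for the fit that would be inferred from the $\epsilon$-approximate design, subtraction gives $B\,|f - f'| \leq \epsilon$, hence $|f - f'| \leq \epsilon/B = d\epsilon/(d-1)$. Converting to the error rate through $r = (d-1)(1-f)/d$ and $r' = (d-1)(1-f')/d$ yields $|r - r'| = \tfrac{d-1}{d}|f-f'| \leq \tfrac{d-1}{d}\cdot\tfrac{d\epsilon}{d-1} = \epsilon$, which is precisely Eq.~\ref{eqn: epsilonbounds}.

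The part that deserves the most care is not any single calculation but the $l=1$ restriction. Theorem~\ref{thm: survival-prob-bound} shows the probability-level error grows linearly as $l\cdot\epsilon$, and when I propagate this through the decay ansatz the factor of $f^{l-1}$ in the denominator (from expanding $(f \pm \delta f)^l \approx f^l \pm l f^{l-1} \delta f$) would push the bound on $\delta r$ to $\epsilon/f^{l-1}$, which is worse for any $l>1$. So the proof really is tight only at $l=1$, and the no-SPAM hypothesis is what lets me use the clean values of $B$ rather than a $B$ that itself depends on unknown depolarising SPAM parameters; with SPAM present the same argument goes through but with an extra $1/p_s$ factor, as in Lemma~\ref{lem:delta_r-ed}. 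Once these two simplifying assumptions are in place, the remainder is routine substitution.
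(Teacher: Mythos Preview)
Your proposal is correct and follows essentially the same route as the paper: the paper's proof simply specialises Lemmas~\ref{lem: delta_f-ed} and~\ref{lem:delta_r-ed} to $l=1$ and $p_s=1$, which is exactly the propagation of the Theorem~\ref{thm: survival-prob-bound} bound through the fit ansatz that you carry out explicitly. Your discussion of why $l=1$ is the tightest choice and how $p_s$ would enter otherwise also mirrors the paper's surrounding commentary.
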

For simplification, we assume no SPAM errors, $p_s=1$, in our analysis; however, it is clear that the error in $f_{l}$ and $r_{l}$ would increase with greater SPAM errors. By extrapolating the values of $f'$ and $r'$ from the survival probability of the smallest length, $l=1$, Eq.~\ref{eqn: r_l} allows us to find the smallest error. In practise the SPAM errors $p_s$ exist and we need multiple values of $l$ to estimate and remove this contributing factor from $p_{s}$. This is not necessary for our purposes and we can take the weakest bound on the average error rate, where $l=1$, and is given by $\delta r_{min}=\epsilon$, while $\delta f_{min}=\epsilon/B$.
\begin{proof}[Proof for Lemma \ref{lem: boundsonr}]
When we set $l = 1$ and $p_{s} = 1$, the error induced in the fidelity decay parameter $f$ and the average error-rate $r$ become: 
\begin{eqnarray}
\delta f &=& \frac{\epsilon}{B} \nonumber \\
\delta r &=& \epsilon \enspace , \\ 
\end{eqnarray}
and therefore: 
\begin{equation}
r - \epsilon \leq r' \leq r + \epsilon
\enspace ,
\end{equation}
where $r' = r \pm \delta r$ is the average error-rate found from RB using an $\epsilon$-approximate 2-design and $r$ is that found from an exact 2-design. 
\end{proof}

\end{document}